\documentclass[conference]{IEEEtran}
\IEEEoverridecommandlockouts
\usepackage{cite}
\usepackage{amsmath,amssymb,amsfonts}

\usepackage{graphicx}
\usepackage{textcomp}
\usepackage{xcolor}

\usepackage{times}
\usepackage{lipsum} 
\usepackage{epsfig}
\usepackage{graphicx}
\usepackage{amsmath}
\usepackage{amssymb}
\usepackage{algorithm}
\usepackage{algpseudocode}
\usepackage{booktabs}
\usepackage{makecell}
\usepackage{caption}
\usepackage{subcaption}
\usepackage{balance}

\usepackage{indentfirst}
\usepackage{multirow}
\usepackage{rotating}
\usepackage{array}
\usepackage{makecell}
\usepackage{tabularx}

\newcolumntype{C}[1]{>{\centering\arraybackslash}m{#1}}
\newcolumntype{Y}{>{\centering\arraybackslash}X}

\usepackage{bm}
\usepackage{amssymb}
\usepackage{pifont}

\usepackage{xcolor}
\usepackage{soul}
\usepackage{float}

\usepackage{amsmath,amsthm,amssymb}
\newtheorem{theorem}{Theorem}

\newtheorem{definition}{Definition}
\newtheorem{assumption}{Assumption}
\theoremstyle{remark}
\theoremstyle{claim}
\newtheorem*{remark}{Remark}

\renewcommand{\theequation}{\arabic{equation}}

\theoremstyle{remark}

\usepackage[caption=false, font=footnotesize, labelfont=bf, textfont=sf]{subfig}

\newcommand{\myparatight}[1]{\smallskip\noindent{\bf {#1}:}~}
\usepackage{xspace}

\newcommand{\algO}{\textsf{SRU}\xspace}

\newcommand{\algM}{\textsf{PRU}\xspace}
\newcommand{\algMns}{{\textsf{PRU}}}

\usepackage{cleveref} 
\crefname{axiom}{axiom}{axioms} 
\crefname{definition}{definition}{definitions}
\crefname{lemma}{lemma}{lemmata}

\def\BibTeX{{\rm B\kern-.05em{\sc i\kern-.025em b}\kern-.08em
    T\kern-.1667em\lower.7ex\hbox{E}\kern-.125emX}}

\begin{document}

\setlength{\textfloatsep}{4pt}
\setlength{\intextsep}{4pt}
\setlength{\floatsep}{4pt}

\title{Network Digital Untwinning: Towards Backward Optimization of Digital Twins}

\author{
		\IEEEauthorblockN{
            Zifan Zhang$\IEEEauthorrefmark{1}$,
            Dianwei Chen$\IEEEauthorrefmark{2}$,
            Anjun Gao$\IEEEauthorrefmark{3}$,
            Manhua Wang$\IEEEauthorrefmark{4}$, \\
            Mingzhe Chen$\IEEEauthorrefmark{5}$,
            Minghong Fang$\IEEEauthorrefmark{3}$,
            Xianfeng Yang$\IEEEauthorrefmark{2}$,
            Yuchen Liu$\IEEEauthorrefmark{1}$
 }
		\IEEEauthorblockA{
		$\IEEEauthorrefmark{1}$North Carolina State University, USA,
        $\IEEEauthorrefmark{2}$University of Maryland, USA, \\
		$\IEEEauthorrefmark{3}$University of Louisville, USA
        $\IEEEauthorrefmark{4}$University of Michigan, USA
        $\IEEEauthorrefmark{5}$University of Miami, USA
		}
}

\maketitle
\begin{abstract}
Network digital twins (NDTs) are transforming network management by offering precise virtual replicas of physical network systems. However, their reliance on diverse and sensitive data introduces significant challenges related to data management, regulatory compliance, and user privacy. In scenarios where selective data removal is necessary, such as device deactivation, network reconfiguration, or regulatory compliance, traditional approaches often fall short of preserving the integrity of the twin model.
To address this gap, we introduce a network digital untwinning framework that enables the targeted removal of deprecated NDT contributions while maintaining model integrity. Our approach comprises two complementary components: Single Request Untwinning (\algO) and Parallel Request Untwinning (\algM) mechanisms. \algO leverages connectivity metrics based on geographical proximity, data distribution, and network-level attributes to identify and remove the target NDT along with its propagating influence. This is achieved through an optimally selected rollback checkpoint augmented with injected Gaussian noise, followed by a precise remapping phase. \algM extends this mechanism to efficiently handle multiple removal requests by clustering NDTs with similar attributes and performing a coordinated rollback and untwinning schedule. We provide theoretical guarantees on model indistinguishability from scratch-built twins, and validate the framework through extensive experiments on real-world traffic data, demonstrating its effectiveness and operational efficiency.

\end{abstract}



\section{Introduction}

In recent years, network digital twins (NDTs) have emerged as a key enabler for next-generation (nextG) networks by providing high-fidelity virtual replicas of physical infrastructure~\cite{wu2021digital,zhang2025synergizing}. 
These digital representations support precise network planning, resource optimization, and performance evaluation prior to deployment, reducing risks and improving reliability. 
NDTs are also central to adaptive decision-making in dynamic settings such as cellular networks~\cite{he2022resource}, IoT~\cite{elayan2021digital}, and autonomous vehicles~\cite{wang2024smart, li2025towards, pegurri2025van3twin}. 
Through continuous bi-directional synchronization, NDTs enable real-time diagnostics, efficient resource use, and proactive maintenance~\cite{khan2022digital, tao2024wireless}. 
They also support agile cache management~\cite{zhang2024digital}, and leverage analytics to generate corner-case scenarios for testing safety-critical systems~\cite{villa2024colosseum}, i.e., serving as virtual testbeds that enable safe validation of new configurations, protocols, and technologies without impacting real-world infrastructure.

\myparatight{Motivation}
Despite its visible benefits, the adoption of NDTs in nextG networks introduces complex challenges related to data management, security, and regulatory compliance~\cite{zhang2025two, wang2024secure, jabeen2025securing, wang2025rsaka, li2024secr}. 
The twinning process depends on large-scale, distributed, and often sensitive datasets to simulate and optimize network behavior, including aspects such as traffic flows, device interactions, service demands, and dynamic changes in a network topology~\cite{zhang2025digitalmag}. 
This dependency exposes them to risks concerning data quality and privacy, particularly in collaborative, multi-user environments. Critical scenarios arise when user or device data must be selectively removed without compromising the NDT’s integrity, such as when users disconnect, devices deactivate, or network configurations evolve. 
Failing to address these scenarios can result in outdated, low-quality, or even adversarial data remaining in the system, degrading model performance and exposing vulnerabilities~\cite{zhang2024securing}.
Moreover, strict regulatory frameworks like General Data Protection Regulation (GDPR) and California Consumer Privacy Act (CCPA) require mechanisms that enable users to request data erasure, further emphasizing the need for a \textit{retrievable} NDT architecture, as specified in the ITU standard~\cite{Y3090,Y3091}. 
This calls for robust \textit{untwinning} capabilities, the process of selectively reversing or backward optimizing NDT models to eliminate unregulated data while preserving their reliability and functionality. 
To align with the capability levels defined in these standards, the NDT lifecycle must evolve from a unidirectional mapping process to a bidirectional framework. This requires embedding data erasure and model rollback mechanisms directly into the core network control loop.
Despite its significance, such backward optimization remains underexplored in current NDT research, which predominantly focuses on forward twinning and model construction. This work bridges this critical gap by introducing backward twinning mechanisms to ensure privacy compliance, uphold model quality, and enable effective responsiveness to dynamic operational and regulatory demands.

\myparatight{Challenges and Gaps}
Modeling the network digital untwinning process presents several unseen challenges. 
It is not merely a \textit{data-centric} unlearning problem, but a \textit{scenario-wide} backtracking procedure. First, NDTs are not independent clients, as their attributes are coupled through interference constraints and traffic-flow conservation laws, meaning that removing a single twin can propagate ripple effects across the entire network. 
Second, NDTs must maintain synchronization with their physical counterparts; full remapping would violate real-time constraints, making operational efficiency in model untwinning essential. Therefore, selectively removing network entities and their associated data contributions requires a careful balance between preserving twin accuracy and minimizing computational overhead.
From the data-level standpoint, privacy preservation emerges as another key challenge\footnote{While our primary focus is not on privacy protection in NDT systems, the proposed method inherently mitigates risks of data and model leakage.}. 
During the untwinning process, there is a heightened risk of inadvertently exposing sensitive information linked to the removed nodes within the updated NDT, particularly in collaborative or adversarial wireless environments. 
Existing machine unlearning methods~\cite{Fraboni2023sifu,jiang2024towards,10189868, 10234397,liu2020federaser} are ill-suited for direct adaptation to network digital untwinning, as detailedly evaluated in later Section~\ref{sec:exp}.
Specifically, retraining-based or exact unlearning often needs full or near-full re-optimization, which is too slow for the strict synchronization cadence of NDTs. 
Other unlearning methods rely on locality and small-perturbation assumptions, but NDT telemetry is globally coupled and dynamic, so removing one entity reshapes the effective data distribution and network constraints for others, violating both the replay-locality assumption of sharding-based methods and the smooth-change assumption of gradient or influence-based approximations.
Federated unlearning also often assumes clients are independent, but NDTs break this assumption because their states are linked through shared spectrum, topology, and traffic dynamics.
Additionally, most unlearning approaches are designed for classification tasks and lack the versatility to address regression-based models—central to NDTs—that always simulate continuous spatio-temporal variables and are tightly coupled to environmental dynamics.

\myparatight{Main Contribution}
To address the aforementioned limitations, we propose, for the first time, a network digital untwinning scheme that strategically removes joint data-scenario entities from constructed NDTs. The framework comprises two key components: single-request untwinning (\algO) and parallel-request untwinning (\algM). 
Given the challenges of mapping a one-size-fits-all digital twin, both \algO and \algM consider the interrelationships among a set of distributed local NDTs coordinated by a global twin as well-studied in~\cite{zhang2025digitalmag,zhang2024mapping,chen2024distributed, wu2021digital}.
Specifically, \algO targets backward optimization for individual untwinning requests, while \algM is tailored to manage multiple concurrent requests. 
When an untwinning request is issued such as node deactivation under data retention policies, \algO identifies and removes the target NDT’s \textit{influence} by computing dependency scores based on twinning region proximity, data similarity, and interconnectivity of mapped nodes. It then selects an optimal rollback point, injects Gaussian noise to obscure residual patterns, and triggers a fast remapping phase to restore model integrity.

In large-scale networks with interconnected NDTs~\cite{lin20236g}, \algM processes simultaneous untwinning requests by clustering NDTs based on geographical and communication attributes. It then selects rollback points using sensitivity criteria, performs parallel rollbacks with noise injection to erase partial contributions, and re-aggregates models to maintain a consistent global twin with minimal degradation.
\algM extends beyond targeting a single NDT by simultaneously untwinning a set of strongly connected twins based on network-level attributes. This approach is driven by the observation that closely linked twins often share redundant and correlated knowledge, making isolated removal ineffective and leaving residual influence in neighboring twins.

The main contributions can be summarized as follows:
\begin{list}{\labelitemi}{\leftmargin=1em \itemindent=-0.08em \itemsep=.2em}
    \item We propose a network digital untwinning framework, comprising \algO and \algMns, as the first solution specifically designed for NDTs to address the challenge of efficient model reversal with accurate contribution removal and adaptive checkpoint saving. The framework supports untwinning requests in both singular and parallel forms, with configurable strategies tailored to different operational contexts.

    \item We provide a theoretical guarantee on both \algO and \algMns, illustrating the indistinguishability between our updated twin models and map-from-scratch twin models.

    \item We conduct extensive experiments within a real-world vehicular system assisted by NDTs, including tasks such as traffic flow and vehicle speed forecasting. The evaluation results demonstrate the effectiveness and efficiency of our proposed framework in untwinning the corresponding NDTs in response to scalable requests.

\end{list}	

\section{Background and Preliminaries}
\label{sec:prelim}

Consider a scenario with multiple connected vehicles. Traditional federated learning (FL) systems treat each vehicle or deployed sensors as an independent client, training locally and sending updates to a roadside unit (RSU)-based server. However, in NDT systems, clients are interdependent—sharing spectrum, RSUs, and exhibiting correlated traffic. To address this, we (i) cluster clients with similar conditions (e.g., same lane or intersection), (ii) conduct twinning/untwinning within clusters to prevent indirect data leakage, and (iii) roll back only affected clusters. This \textbf{clustering-on-rollback strategy}, absent in conventional FL and unlearning studies, enables efficient and consistent untwinning in interdependent environments. As the first untwinning framework, our workflow will align with the established twinning process for consistency and clarity.

\myparatight{Network digital twinning}
In an NDT-assisted vehicular network~\cite{hakiri2024comprehensive, ding2025digital}, the local task-oriented NDTs are deployed on sensors along the road, while an edge computing server hosts a global twin deployed on the RSU for central coordination.
Note that $\bm{w}_n^t$ represents the local twin model $n \in [N]$ at mapping round $t \in [T]$, while $\bm{w}^t$ refers to the global twin model hosted on the roadside server during round $t$.
The objective is to minimize the traffic prediction error across $N$ NDTs $\bm{w}_n^t$ based on their sensory data, indexed by $n$. 
This problem can be framed as a \textbf{\textit{forward optimization problem}} to construct an optimal global twin model $\bm{w}^*$ stored on a roadside server, similar to the prior works in~\cite{wang2022mobility, zhang2021digital}:
\begin{align}
\label{all_obj}
w^{*,t} = \arg\min_{w} \frac{1}{N} \sum_{n=1}^{N} G(g(a_n^t, w), b_n^t),
\end{align}
where $G$ represents the loss function, defined as $G(g(a_n^t,$  $\bm{w}^t), b_n^t) = \left\| g(a_n^t, \bm{w}^t) - b_n^t \right\|$. 
Here, $g(a_n^t, \bm{w}^t)$ denotes the prediction made by the global twin model $\bm{w}^t$ for input $a_n^t$, and $b_n^t$ is the corresponding true label or target. 
This optimization task can be executed in a distributed fashion, as specified in the ITU standards~\cite{Y3090,Y3091}.
The forward twinning process includes four main steps: 
\begin{list}{\labelitemi}{\leftmargin=1em \itemindent=-0.08em \itemsep=.2em}
    \item \textbf{Step I (Local NDT mapping).}  
    Each NDT $n$ collects its time-series data $D_n$ from the local sensor and utilizes the initial global twin model parameters $\bm{w}^{t-1}$ to perform local mapping. This local mapping adapts the previous global twin model with local data, yielding an updated local twin model $\bm{w}_n^{t}$. Each NDT then transmits this locally mapped twin model back to the RSU server. 
    \item \textbf{Step II (Global NDT aggregation).}  
    Once all updated local twin models $\bm{w}_n^{t}$ have been received, the server aggregates them using a predefined aggregation rule ($\text{AGGR}$) to produce an updated global twin model $\bm{w}^{t}$.
    A commonly-used aggregation rule is the Federated Averaging (FedAvg)~\cite{mcmahan2017communication}, where RSU server computes an average of local twin models, i.e.,
    \(
    \text{AGGR}\{\bm{w}_1^{t}, \bm{w}_2^{t}, \ldots, \bm{w}_n^{t}\} = \frac{1}{N} \sum\limits_{n=1}^N \bm{w}_n^{t}.
    \)
    The global twin model $\bm{w}^{t}$ is then updated.
    \item \textbf{Step III (NDT synchronization).}  
    At the end of each mapping round, the roadside server broadcasts the current global twin model $\bm{w}^t$ to all participating NDTs. 
    %
    \item \textbf{Step IV (Scenario reconstruction and generation).}  
    Throughout $T$ twinning rounds, the global twin model $\bm{w}^t$ is iteratively refined to capture patterns from the heterogeneous data and scenario information while preserving the privacy of each NDT's local dataset. 
    
\end{list}

\myparatight{Network Digital Untwinning}
After the twinning process is completed, the network system may receive requests to remove the influence of deprecated users or sensors from a particular NDT. 
To ensure compliance and adaptability, an untwinning process must be performed. 
Consequently, a \textbf{\textit{backward optimization problem}} is formulated to retrieve the optimal global twin model $\bm{w}^*$, as detailed in Sec.~\ref{sec:formulation}.
This procedure essentially reverts $\bm{w}^t$ to a safe checkpoint and remaps it to eliminate any effects from target NDTs $\bm{w}_n^{t}$, thereby achieving a positive forgetting, as defined as follows: 

\begin{list}{\labelitemi}{\leftmargin=1em \itemindent=-0.08em \itemsep=.2em}
    \item \textbf{Step I (Removal decision).}  
    Each NDT $n$ either (i) autonomously detects deprecated data or malicious entities that should be removed from the twinning process, or (ii) receives a user request $u \in [U]$ to untwin its private data or related NDTs. If $[U] \neq \emptyset$, the system proceeds with the following untwinning steps.
    \item \textbf{Step II (Checkpoint rollback).}  
    Let $K$ be the number of roll-back rounds, determined by the sensitivity of target NDTs in $[U]$. The global twin model reverts $\bm{w}^{t}$ to a prior checkpoint $\bm{w}^{t-K}$ with minimal effects from target NDTs.

    \item \textbf{Step III (NDT optimization).}  
    Each remaining NDT resumes the forward twinning for $K$ rounds, or until convergence, using only the updated dataset. 
\end{list}

\section{Context-Aware Network Digital Untwinning}
\label{sec:method}

In Sec.~\ref{sec:prelim}, we presented the generic process of network digital twinning and introduced the basic concept of untwinning in distributed network settings. Building on this foundation, this section formally defines the problem of network digital untwinning. We then introduce \algO and \algM, which enable single-instance and parallel network digital untwinning processes, respectively, depending on the current conditions of the unlearning request (as detailed in Secs.~\ref{sec:algorithm_single} and~\ref{sec:algorithm_parallel}).

\subsection{Problem Formulation}
\label{sec:formulation}

In the network digital untwinning process, the server receives a set of unlearning requests, denoted as $\mathcal{U}$, which identifies the target local NDTs that must be removed to eliminate their influence on the global twin model $\bm{w}^T$.
The goal is to \textit{remap} (update) the global twin to reflect only the remaining data contexts, minimizing communication and computational costs compared to a full reconstruction (retraining from scratch).

However, obtaining such a global NDT cannot be achieved through simple aggregation adjustments (e.g., subtracting weights), as the local twin models are inherently coupled and require dynamic coordination. Therefore, remapping and fine-tuning techniques are essential. We formulate this problem as constructing an untwinned global NDT that is statistically indistinguishable from one generated through a full map-from-scratch process.

\begin{definition}[$(\epsilon, \beta)$-Indistinguishability: Unlearning Guarantee]
Let $X$ and $Y$ be random variables over a domain $\mathcal{R}$ (representing the distribution of trained models). For any subset $S \subseteq \mathcal{R}$, $X$ and $Y$ are $(\epsilon, \beta)$-indistinguishable if:
\begin{align}
    \Pr[X \in S] &\leq e^\epsilon \cdot \Pr[Y \in S] + \beta, \\
    \Pr[Y \in S] &\leq e^\epsilon \cdot \Pr[X \in S] + \beta.
\end{align}
\end{definition}

\myparatight{Main objective}
The objective is to efficiently remap a global NDT that is $(\epsilon, \beta)$-indistinguishable from a reference model obtained through a full map-from-scratch process on the retained dataset (excluding $\mathcal{U}$).
Formally, let $\bm{w}_{\text{untwinned}}$ be the model produced by our method and $\bm{w}_{\text{retrain}}$ be the model trained from scratch without $\mathcal{U}$. We require these two random variables to satisfy Definition 1.
This ensures that any observer is unable to distinguish whether the new global twin model was incrementally untwinned or entirely reconstructed, thereby guaranteeing the effective removal of the target data while significantly reducing system overhead.

\subsection{Single-Request Untwinning (\algO)}
\label{sec:algorithm_single}


\begin{figure}[!h]
    \centering
    \includegraphics[width=0.48\textwidth]{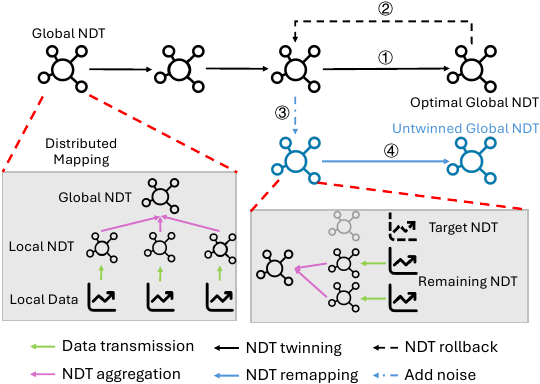}
    \caption{The workflow of \algO. We use neural network symbols to represent NDT models, but can be extended to any models. }
    \label{fig:single}
\end{figure}

\algO is designed to remove the impact of a specific target NDT $n_u$ from the global twin model $\bm{w}^T$. This process not only excludes direct contributions from $n_u$ but also mitigates \textit{indirect} effects propagated through other interrelated NDTs that share significant connectivity with $n_u$. \algO consists of three main steps.

\myparatight{1) Untwinning set identification}
Initially, the network system proceeds with standard digital twinning using local sensory data, as shown in Step 1 of Fig.~\ref{fig:single}.
Upon receiving an untwinning request for a target NDT $n_u$, we must identify correlated peers. Each NDT $n \in [N]$ is assigned an importance score $\mathcal{I}(n)$ relative to the target, based on a pre-computed connectivity matrix $\mathbf{C}$:
\(
\mathcal{I}(n) \leftarrow \mathbf{C}(n, n_u).
\)
The connectivity matrix $\mathbf{C}$ quantifies the coupling strength between twins. We construct this matrix based on an attribute sequence of the Base Stations (BSs), denoted as $\{ g, k, \delta, \tau \}$, representing geographical distance, backhaul link capacity, coverage overlap, and data distribution similarity, respectively. The correlation metric is defined as:
\begin{equation}
\Phi_{i,j} = \frac{\omega_g}{g_{i,j}} + \omega_k \cdot k_{i,j} + \omega_\eta \cdot \delta_{i,j} + \omega_\tau \cdot \tau_{i,j},
\end{equation}
where $\omega_{(\cdot)}$ represents the weighting coefficients for each attribute.
A higher importance score $\mathcal{I}(n)$ indicates that NDT $n$ is implicitly \textit{connected} to $n_u$, leading to potential data leakage during aggregation.
A user-specified untwinning threshold $\theta$ is applied to determine the \textit{untwinning set} $\mathcal{S}_u$ (Line~\ref{line:unlearned_set}, Algorithm~\ref{federated_untwinning_single}). The set $\mathcal{S}_u$ includes $n_u$ and any NDTs where $\mathcal{I}(n) > \theta$. This coordinated strategy effectively removes shared redundancies and correlated influences, enhancing privacy while retaining weakly connected twins to preserve model utility.

\myparatight{2) Checkpoint rollback}
In Line~\ref{line:remove_NDT}, Algorithm~\ref{federated_untwinning_single} excludes all NDTs in $\mathcal{S}_u$ from participation. Subsequently, we determine the optimal rollback depth $K$ based on a sensitivity metric (Lines~\ref{line:get_sigma}--\ref{line:get_K}).
Directly computing the exact influence of removed data for every round is computationally impractical. Instead, we approximate the contribution sensitivity. Let the instantaneous sensitivity at round $t$ be:
\begin{equation}
\Delta_t = \mathcal{I}(n_u) \cdot \bigl\lVert \bm{w}(\mathcal{D}) - \bm{w}(\mathcal{D}_{\setminus \mathcal{S}_u}) \bigr\rVert,
\end{equation}
where $\mathcal{I}(n_u)$ scales the impact based on the target's connectivity.
We then formulate the cumulative contribution $\phi(t)$ over the history:
\begin{equation}
\phi(t) = \sum_{\tau=0}^{t-1} (1 + \eta L)^{t-1-\tau} \cdot \Delta_{\tau},
\label{eq:contribution}
\end{equation}
where $\alpha_\tau$ represents a decay factor relative to the learning rate.
Using this accumulated influence, we derive a privacy-preserving rollback criterion $\gamma(t)$, calibrated against a Gaussian noise budget $\Omega$~\cite{dwork2014algorithmic}:
\begin{equation}
\label{eq:sensitivity}
    \gamma(t) = \frac{\Omega}{\epsilon \cdot \phi(t)}.
\end{equation}
The goal is to locate the most recent checkpoint $\bm{w}^{T-K}$ where the accumulated influence of the target set $\mathcal{S}_u$ is below a safety threshold $\gamma^*$. Thus, $K$ is determined by finding the minimal rollback distance:
\begin{equation}
K = T - \max \left\{ t \mid \gamma(t) \le \gamma^* \right\}.
\end{equation}

\myparatight{3) NDT perturbation and optimization}
After determining $K$, the system reverts the global model to $\bm{w}^{T-K}$. To guarantee the removal of residual information, we inject Gaussian noise (Line~\ref{line:rollback}):
\(
\widetilde{\bm{w}} = \bm{w}^{T-K} + \mathcal{N}(\mathbf{0}, \delta \mathbf{I}).
\)
This noise obfuscates trace patterns from $\mathcal{S}_u$ that may persist in the checkpoint.
Once $\widetilde{\bm{w}}$ is synchronized (Line~\ref{line:sync}), the system initiates remapping from round $t = T-K$ to $T$ (Step 4, Fig.~\ref{fig:single}).
In each round, remaining NDTs sample mini-batches, compute gradients $\bm{g}(\bm{w}_n^t)$, and send updates to the server. The server aggregates them via:
\begin{equation}
\label{eq:FedAvg}
\bm{w}^t \leftarrow \frac{1}{|N \setminus \mathcal{S}_u|} \sum_{n \in [N] \setminus \mathcal{S}_u} \bm{w}_n^t.
\end{equation}
By removing $\mathcal{S}_u$ and retraining from a clean checkpoint, \algO eliminates the \textbf{propagating influence} of the target while preserving the utility of the remaining participants. 

\begin{algorithm}[t]
\small
\caption{Single-Request Untwinning (\algO)}
\label{federated_untwinning_single}
\begin{algorithmic}[1]
    \Require NDTs $[N]$, connectivity matrix $\mathbf{C}$, threshold $\theta$, target $n_u$, global model $\bm{w}^T$, mapping rate $\eta$, safety threshold $\gamma^*$
    \Ensure Updated global twin model $\bm{w}^{T'}$

    \For {each NDT $n \in [N]$}
        \State Compute importance score $\mathcal{I}(n) \leftarrow \mathbf{C}(n, n_u)$
        \label{line:score_loop}
    \EndFor
    \State Identify untwinning set $\mathcal{S}_u \leftarrow \{ n \in [N] : \mathcal{I}(n) \geq \theta \} \cup \{n_u\}$
    \label{line:unlearned_set}
    \State Remove all $n \in \mathcal{S}_u$ from active set $[N]$
    \label{line:remove_NDT}
    
    \State \textbf{Calculate Rollback Step:}
    \State Compute influence curve $\gamma(t)$ for $t \in [0, T]$ via Eq.~(\ref{eq:sensitivity})
    \label{line:get_sigma}
    \State Determine rollback round: $t_{\text{safe}} = \max \{ t \mid \gamma(t) \leq \gamma^* \}$
    \State Set $K = T - t_{\text{safe}}$
    \label{line:get_K}
    
    \State \textbf{Rollback and Perturb:}
    \State $\tilde{\bm{w}} = \bm{w}^{T-K} + \mathcal{N}(0, \delta)$
    \label{line:rollback}
    \State Broadcast $\tilde{\bm{w}}$ to remaining NDTs
    \label{line:sync}
    
    \State \textbf{Retraining Phase:}
    \For {$t = T-K+1$ to $T$}
        \label{line:outer_for}
        \For {each remaining NDT $n \in [N]$}
            \label{line:inner_for}
            \State Sample batch from $D_n$
            \label{line:sample}
            \State Compute gradient $\bm{g}(\bm{w}_n^{t-1})$
            \label{line:grad}
            \State Update local twin $\bm{w}_n^t \leftarrow \bm{w}^{t-1} - \eta \bm{g}(\bm{w}_n^{t-1})$
            \label{line:update}
            \State Upload $\bm{w}_n^t$ to server
            \label{line:send_server}
        \EndFor
        \State $\bm{w}^t \leftarrow \text{AGGR}\bigl\{ \bm{w}_n^t \bigr\}$
        \label{global_update}
        \State Broadcast $\bm{w}^t$
        \label{line:sync2}
    \EndFor
\end{algorithmic}
\end{algorithm}

\subsection{Parallel-Request Untwinning (\algM)}
\label{sec:algorithm_parallel}

\begin{figure*}[!t]
    \centering
    \includegraphics[width=0.8\textwidth]{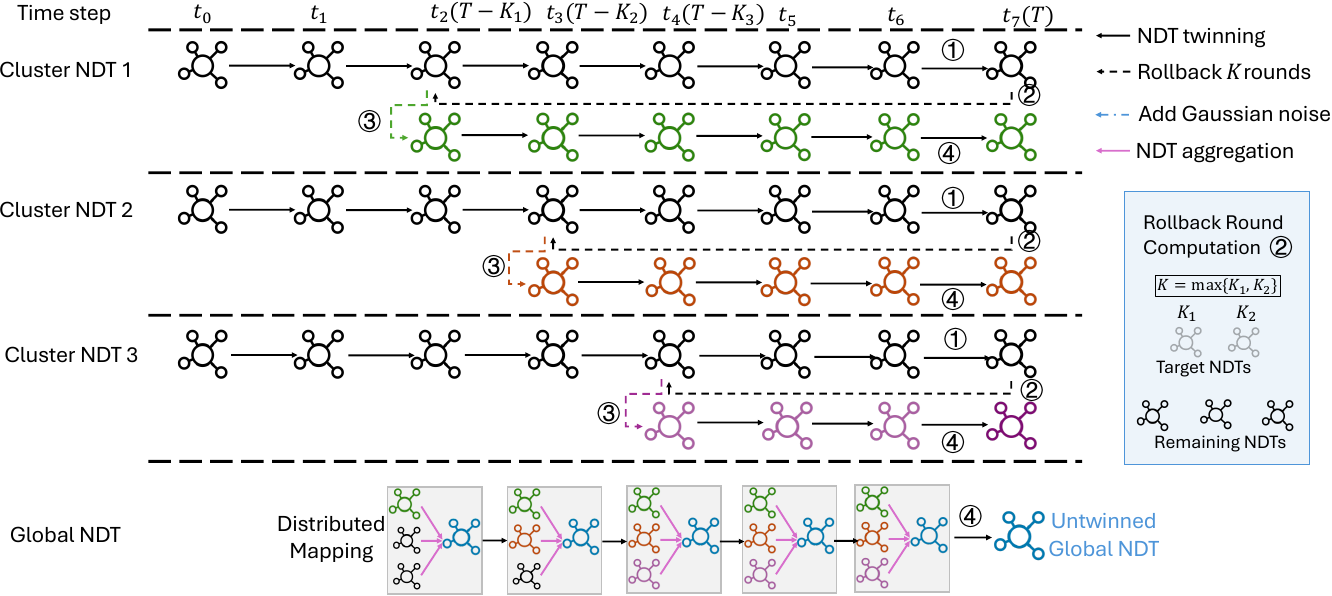}
    \caption{\algM consists of four main steps: (1) Parallel forward twinning; (2) Context-aware scheduling to determine cluster-specific rollbacks; (3) Cluster-level perturbation for obfuscation; (4) Staggered re-twinning until global convergence.}
    \label{fig:parallel}
    \vspace{-0.1in}
\end{figure*}

\algM addresses the scenario where the system receives multiple untwinning requests $\mathcal{U}$ from different local NDTs simultaneously, as shown in Fig.~\ref{fig:parallel} and Algorithm~\ref{federated_untwinning_multiple}.
Unlike \algO, which targets a single request, \algM adopts a clustering approach to parallelize the process. This improves efficiency by grouping NDTs and performing one rollback operation per cluster, managed by a context-aware scheduling mechanism.
\algM consists of three interrelated procedures:

\myparatight{1) NDT clustering}
Local NDTs are grouped into clusters $\mathcal{C} = \{C_1, \dots, C_M\}$. The clustering algorithm, adapted from~\cite{zhang2024mapping}, periodically groups NDTs with similar communication characteristics and network configurations.
Metrics include the geographical (not geological) distances between network entities and similarity of sensory data distributions.
The objective is to localize the influence of deprecated data to specific clusters, allowing system to roll back only the affected subgroups rather than entire network.
We employ an adaptive clustering strategy in which the server maintains a connectivity matrix and triggers re-clustering only when topology changes exceed a preset threshold. This ensures the overhead remains modest while accurately tracking system dynamics.

\myparatight{2) Context-aware scheduling}
For each cluster $C_a \in \mathcal{C}$, Algorithm~\ref{federated_untwinning_multiple} identifies the subset of relevant untwinning requests $\mathcal{U}_a = \mathcal{U} \cap C_a$ (Line~\ref{line:Ua_intersect}).
The system then calculates the required rollback depth for the cluster. Because a cluster may contain multiple target NDTs, the cluster must roll back far enough to satisfy the strictness requirement of the \textit{most sensitive} target in that group.
The cluster rollback round $K_a$ is determined by:
\begin{equation}
K_a = \max_{n \in \mathcal{U}_a} \left( T - \max \left\{ t \mid \gamma_n(t) \le \gamma^* \right\} \right).
\end{equation}
This ensures that by reverting $K_a$ rounds, the cluster model $\bm{w}_{C_a}$ reaches a checkpoint safe for all targets within it.

\myparatight{3) Cluster-level perturbation and optimization}
To obscure any residual updates from the removed NDTs, the rolled-back cluster model $\bm{w}^{T-K_a}$ is perturbed with Gaussian noise (Line~\ref{line:rollback_multiple}):
\begin{equation}
\widetilde{\bm{w}}_a = \bm{w}_a^{T-K_a} + \mathcal{N}(\mathbf{0}, \delta \mathbf{I}).
\end{equation}
This perturbed state serves as the consistent starting point for the remapping phase.
A global remapping phase is then executed from $t_{start} = T - \max_a(K_a)$ to $T$.
During this phase, clusters rejoin the training process in a staggered manner based on their rollback depth:
\begin{enumerate}
    \item \textbf{Active State ($t \ge T - K_a$):} If the current round $t$ is within the cluster's retraining window, the cluster proceeds with forward twinning. Each NDT $n \in C_a$ computes local updates, which are aggregated to form the updated cluster model $\bm{w}_a^t$.
    \item \textbf{Waiting State ($t < T - K_a$):} If the cluster has not yet reached its rollback point (i.e., it required less rollback than others), its model remains frozen at the widely synchronized state until the timeline catches up.
\end{enumerate}
For example, as shown in Fig.~\ref{fig:parallel}, Cluster 1 may have a large $K_1$ and begin retraining immediately, while Cluster 2 (with a small $K_2$) waits and joins the process later. This strictly preserves the accuracy of unaffected or less-affected clusters by preventing unnecessary rollback.
Finally, the server aggregates cluster-level models to produce the global twin $\bm{w}^T$.

\begin{algorithm}[t]
\small
\caption{Parallel-Request Untwinning (\algM)}
\label{federated_untwinning_multiple}
\begin{algorithmic}[1]
    \Require NDTs $[N]$, connectivity matrix $\mathbf{C}$, requests $\mathcal{U}$, global model $\bm{w}^T$, mapping rate $\eta$, clusters $\mathcal{C}$
    \Ensure Updated global twin model $\bm{w}^{T'}$

    \State Initialize max rollback $K_{\max} \leftarrow 0$
    
    \For {each cluster $C_a \in \mathcal{C}$} \label{line:for_cluster_start}
        \State Identify requests in cluster: $\mathcal{U}_a \leftarrow \mathcal{U} \cap C_a$
        \label{line:Ua_intersect}
        \If {$\mathcal{U}_a \neq \emptyset$}
            \State Calculate safety threshold $\gamma^*$ via Eq.~(\ref{eq:sensitivity})
            \label{line:get_sigma_multiple}
            \State Determine cluster rollback $K_a$ satisfying all $n \in \mathcal{U}_a$
            \label{line:get_Ka}
            \State Update max rollback: $K_{\max} \leftarrow \max(K_{\max}, K_a)$
            
            \State Rollback \& Perturb: $\tilde{\bm{w}}_a = \bm{w}^{T-K_a} + \mathcal{N}(0, \delta)$
            \label{line:rollback_multiple}
            \State Broadcast $\tilde{\bm{w}}_a$ to cluster members
            \label{line:sync_multiple}
        \Else
            \State $K_a \leftarrow 0$
        \EndIf
    \EndFor
    \label{line:for_cluster_end}

    \State \textbf{Staggered Retraining Phase:}
    \For{$t = T - K_{\max}$ to $T$} \label{line:outer_for_multiple}
        \For{each cluster $C_a \in \mathcal{C}$} \label{line:inner_for_cluster}
            
            \If{$t \ge T - K_a$} \label{line:if_t_greater}
                \For {each NDT $n \in C_a$} \label{line:for_each_NDT_na}
                    \State Sample batch from $D_n$
                    \State Compute local update $\bm{w}_n^t \leftarrow \bm{w}_n^{t-1} - \eta \bm{g}(\bm{w}_n^{t-1})$
                    \label{line:update_n}
                \EndFor
                \State Aggregate cluster model: \(\bm{w}_a^t \leftarrow \text{AGGR}\bigl\{ \bm{w}_n^t : n \in C_a \bigr\}\)
                \label{line:aggregate_cluster_a}
            \Else 
                \State Hold state: \(\bm{w}_a^t \leftarrow \bm{w}_a^{t-1}\)
                \label{line:no_update_cluster_a}
            \EndIf
            
        \EndFor \label{line:end_inner_for_cluster}
        
        \State Global Aggregation: \(\bm{w}^t \leftarrow \text{AGGR}\Bigl\{ \bm{w}_a^t : C_a \in \mathcal{C} \Bigr\}\)
        \label{line:agg_across_clusters}
        \State Synchronize $\bm{w}^t$ to all clusters
    \EndFor \label{line:end_outer_for_multiple}
\end{algorithmic}
\end{algorithm}

\subsection{Adaptive Checkpoint Saving and Complexity Analysis}
\label{sec:adaptive-ckpt}

Both \algO and \algM rely on rolling back the global twin model to a historical checkpoint (e.g., $\bm{w}^{T-K}$) before injecting Gaussian noise and remapping. A naive strategy of persisting the model at every round $t$ guarantees exact rollback but incurs a storage cost of $\mathcal{O}(T\cdot d)$, which is prohibitive for highly evolved NDTs. Note that $d$ represents the dimensionality of twin model parameters.
To address this, we design an adaptive topology-aware checkpointing (ATAC) mechanism. ATAC optimizes the \textit{storage-vs-compute} trade-off by maintaining a sparse, non-uniform timeline of checkpoints, prioritizing moments of high model drift and topological structural change.

\myparatight{1) Checkpoint utility analysis}
We identify two key signals that necessitate a snapshot, which are \textit{model instability} and \textit{topology shifts}. At the end of round $t$, the server monitors the drift in the global twin model $\bm{w}^t$ and the updated connectivity matrix $\mathbf{C}^t$:
\begin{equation}
\delta_{\bm{w}}(t) = \|\bm{w}^t - \bm{w}^{t-1}\|_2, \qquad
\delta_{\mathbf{C}}(t) = \|\mathbf{C}^t - \mathbf{C}^{t-1}\|_F.
\end{equation}
An utility score is defined as $u(t) = \lambda_w \delta_{\bm{w}}(t) + \lambda_C \delta_{\mathbf{C}}(t)$, where $\lambda$ balances the two factors.
Note that standard FL checkpointing ignores $\delta_{\mathbf{C}}$. However, in an interdependent NDT context, a sharp change in $\mathbf{C}^t$ (e.g., base station handovers) often triggers NDT re-clustering. Rolling back across a clustering boundary is mathematically complex and error-prone; therefore, we treat rounds where re-clustering occurs as structural anchors that must be checkpointed regardless of model or topology drift.

\myparatight{2) Elastic saving policy}
Let $\mathcal{M}$ denote the set of stored checkpoints on the global server, bounded by a budget $B$. The server persists a new full checkpoint $(t, \bm{w}^t)$ if:
(i) $u(t) \ge \tau_{\text{drift}}$ (indicating significant information gain);
(ii) A re-clustering event is triggered at round $t$; or
(iii) $t$ is a periodic keep-alive round, i.e. $t \bmod p_t = 0$.

To maximize storage efficiency during stable periods, the saving interval $p_t$ is updated online using an inverse-exponential schedule:
\begin{equation}
p_{t+1} = \text{clip}\left( p_t \cdot \exp\bigl(\kappa(\tau_{\text{drift}} - u(t))\bigr), ~ p_{\min}, ~ p_{\max} \right),
\end{equation}
where $\kappa$ controls responsiveness. This rule increases checkpoint sparsity when the twin model or topology is stable, and increases density when the system evolves rapidly.
If $|\mathcal{M}| > B$, we apply \textit{temporal coarsening}: the system retains dense checkpoints for the recent window (e.g., last $H$ rounds) to support fast handling of recent requests, while exponentially sparsifying older checkpoints (keeping one every $2^j$ rounds).

\myparatight{3) Proximal rollback retrieval}
When an untwinning request requires a rollback to round $T-K$, the exact checkpoint $\bm{w}^{T-K}$ may not exist in the sparse set $\mathcal{M}$.
Instead of storing high-dimensional update traces to reconstruct the state (which consumes as much storage as the model itself), we employ an elastic remapping strategy, where
the system retrieves the nearest \textit{preceding} checkpoint:
\begin{equation}
t^* = \max \{ t \in \mathcal{M} \mid t \le T-K \}, \qquad \hat{\bm{w}} = \bm{w}^{t^*}.
\end{equation}
The system then initiates the remapping process from $t^*$ rather than $T-K$.
This effectively extends the remapping phase from $K$ rounds to $K' = K + (T-K - t^*)$.
This approach trades a marginal increase in computational overhead (additional training rounds $T-K - t^*$) for significant storage reduction. Since $p_t$ is strictly bounded by $p_{\max}$, the worst-case additional computation is bounded, ensuring predictable system performance while reducing storage requirements from linear $\mathcal{O}(T)$ to logarithmic $\mathcal{O}(\log T)$.
\section{Theoretical Analysis on Untwinning Process}

This section provides theoretical analysis for our proposed \algO and \algM to guarantee the model integrity. We first present the following key assumptions about the model gradient estimators, which are common in related literature \cite{10234397,Fraboni2023sifu}. The proof is presented based on the NDT aggregation rule of Eq.~(\ref{eq:FedAvg}) as an example, and its arguments are easily generalized to other aggregation rules.
\begin{assumption}[$L$-Smoothness]
    A function $f : \mathbb{R}^d \to \mathbb{R}$ is considered $L$-smooth if it is differentiable and its gradient, $\nabla f : \mathbb{R}^d \to \mathbb{R}^d$, satisfies a Lipschitz continuity property with a constant $L$. Specifically, for any $\bm{w}_1, \bm{w}_2 \in \mathbb{R}^d$, the following inequality holds:
    \begin{equation}
    \|\nabla f(\bm{w}_1) - \nabla f(\bm{w}_2)\| \leq L \|\bm{w}_1 - \bm{w}_2\|.
    \end{equation}
\end{assumption}
\begin{assumption}[Bounded Stochastic Gradient Norm]
    The squared norm of the stochastic gradients is assumed to have a finite expected value. Formally, for each client $i = 1, \ldots, n$ and iteration $t = 1, \ldots, t$, there exists $A > 0$ such that:
    \begin{equation}
    \mathbb{E}_{\xi_t} \|\nabla f_i(\bm{w}_t)\|^2 \leq A^2,
    \end{equation}
    where $\xi_t$ represents the random sampling process. 
\end{assumption}
\begin{remark}
These assumptions are fundamental for establishing theoretical guarantees in optimization algorithms, particularly in distributed and stochastic settings. The $L$-smoothness assumption ensures that the gradient changes at a controlled rate, which is crucial for bounding the convergence rate and ensuring the stability of gradient-based methods. The bounded norm assumption limits the variability introduced by random sampling to ensure stable convergence. 
\end{remark}

While Assumptions 1 and 2 provide the necessary tractability for our theoretical framework, we acknowledge that real-world mobile networks may not strictly satisfy global $L$-smoothness, which is characterized by heterogeneous data and non-convex loss functions. 
However, these assumptions effectively approximate the \textit{local} behavior of the objective function near the convergence trajectory. 
Furthermore, the Bounded Gradient Norm can be practically enforced in our system through \textit{gradient clipping}, a standard engineering practice where the server or clients clip the norm of update vectors $\bm{g}(\bm{w}_n^t)$ to a threshold $C$ before aggregation.
This ensures that the theoretical bounds derived herein remain valid upper bounds for the worst-case divergence in the deployed NDTs.

\begin{theorem}
\label{thm:single_unlearning}
Let $\ell_i(\bm{w})$ be the local loss function of each NDT $i$, and its gradient is $L$-Lipschitz.  
Let $n_u$ be the target NDT, and let $[S_u]$ denote the set of NDTs that are strongly connected according to the connectivity matrix. 
For any iteration $t \leq T$, define the sensitivity degree as
\(
   \alpha(t, n_u) 
   \;=\;
   \bigl\|\,
   \bm{w}\bigl([N], \bm{w}^t\bigr)
   \;-\;
   \bm{w}\bigl(\,[N] \setminus [S_u],\,\bm{w}^t\bigr)
   \bigr\|,
\)
where $\bm{w}([N],\cdot)$ represents the global twin update from iteration~$t$ to $t+1$ when all NDTs $[N]$ participate, and $\bm{w}([N]\!\setminus\![S_u],\cdot)$ is the same update but excluding the set $[S_u]$. If there exists an upper bound function
\(
   \phi(t, n_u)\quad\text{such that}\quad
   \alpha(t, n_u)\;\le\;\phi(t, n_u),
\)
then adding Gaussian noise 
\(
  \xi \sim \mathcal{N}\bigl(\mathbf{0},\,\gamma^{2}\,I\bigr)
\) 
gives
\(
   \gamma 
   \;\ge\;
   \;\frac{\Omega \phi(t,n_u)}{\epsilon}
\)
to the final global twin model ensures $(\epsilon,\beta)$-indistinguishability for the target NDT $n_u$,
where $\Omega = \sqrt{2\,\bigl(\ln(1.25)-\ln(\beta)\bigr)}$.  
In other words, an external observer cannot distinguish whether $n_u$ had ever participated in the twinning process.
\end{theorem}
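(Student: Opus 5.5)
The argument reduces the statement to the classical Gaussian mechanism of differential privacy~\cite{dwork2014algorithmic}. The two random variables in Definition~1 are the released global twins $\bm{w}_{\text{untwinned}} = \bm{w}([N],\bm{w}^t) + \xi$ and the reference $\bm{w}_{\text{retrain}} = \bm{w}([N]\setminus[S_u],\bm{w}^t) + \xi'$, with $\xi,\xi' \sim \mathcal{N}(\mathbf{0},\gamma^2 I)$. Both are Gaussians with the same covariance. Their means differ by the vector $\Delta := \bm{w}([N],\bm{w}^t) - \bm{w}([N]\setminus[S_u],\bm{w}^t)$, whose norm is $\alpha(t,n_u) \le \phi(t,n_u)$. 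The task is therefore to show that two isotropic Gaussians with mean gap at most $\phi$ and standard deviation $\gamma \ge \Omega\phi/\epsilon$ are $(\epsilon,\beta)$-indistinguishable.

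First I would justify the sensitivity bound $\phi$. One FedAvg step of Eq.~(\ref{eq:FedAvg}) changes the average by dropping the $[S_u]$ terms and renormalizing. Assumption~2 bounds each gradient contribution by $A$. Over the rounds that a removed NDT influences, Assumption~1 makes the per-round difference propagate with growth factor at most $(1+\eta L)$, which gives exactly the form of $\phi(t)$ in Eq.~(\ref{eq:contribution}). Since the theorem takes $\phi$ as a given hypothesis, I only need $\|\Delta\|\le\phi$ and do not have to derive its shape.

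Next comes the privacy-loss computation. By rotational invariance I project onto the direction of $\Delta$, which reduces the problem to one dimension: $\mathcal{N}(0,\gamma^2)$ versus $\mathcal{N}(\|\Delta\|,\gamma^2)$. The log-likelihood ratio at an output $x$ is $\mathcal{L} = (2x\|\Delta\| - \|\Delta\|^2)/(2\gamma^2)$. Under the first distribution, $\mathcal{L}$ is Gaussian with mean $\|\Delta\|^2/(2\gamma^2)$ and standard deviation $\|\Delta\|/\gamma$. I define the bad event $B = \{|\mathcal{L}| > \epsilon\}$ and bound $\Pr[B]$ by a Gaussian tail bound. This needs $|x| \le \gamma^2\epsilon/\|\Delta\| - \|\Delta\|/2$, and the tail estimate $\Pr[Z>r] \le \frac{1}{r\sqrt{2\pi}}e^{-r^2/2}$ then gives $\Pr[B]\le\beta$. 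For any $S$, I split $\Pr[X\in S] \le \Pr[X\in S\cap B^c] + \Pr[B] \le e^\epsilon \Pr[Y\in S] + \beta$. The reverse inequality follows by symmetry, swapping the roles of $X$ and $Y$.

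I expect the main obstacle to be the constant chase showing that $\gamma \ge \sqrt{2\ln(1.25/\beta)}\,\phi/\epsilon$ indeed drives the tail below $\beta$. Following Dwork--Roth Appendix~A, I would pick $\gamma = c\phi/\epsilon$ with $c^2 > 2\ln(1.25/\beta)$. I would then verify the inequality $\ln(r\sqrt{2\pi}) + r^2/2 \ge \ln(1/\beta)$ for $r = c/\epsilon - \epsilon/(2c)$. This uses the standard restriction $\epsilon<1$, which I would state as implicit. Monotonicity in $\gamma$ then extends the result from equality to every $\gamma$ at or above the threshold.

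Finally, I apply post-processing invariance. The observer sees only the final global twin, any later deterministic map preserves indistinguishability, and so the conclusion transfers to $n_u$.
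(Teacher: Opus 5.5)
Your Gaussian-mechanism computation is sound. It is more complete than the paper's, which simply cites Dwork--Roth, and you correctly note that the $\sqrt{2\ln(1.25/\beta)}$ calibration needs $\epsilon<1$, a condition the statement omits.

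The gap is in the pair of distributions you feed into that computation. You compare $\bm{w}([N],\bm{w}^t)+\xi$ with $\bm{w}([N]\setminus[S_u],\bm{w}^t)+\xi'$ and call the latter $\bm{w}_{\text{retrain}}$. Both are one-step updates from the \emph{same} state $\bm{w}^t$, which already carries the contributions of $[S_u]$ from rounds $0,\dots,t-1$. Indistinguishability of this pair only hides one more round of participation given a shared past. It says nothing about whether $n_u$ ``had ever participated'', nor about closeness to the map-from-scratch model. The statement's literal one-step definition of $\alpha$ invites your reading, but the conclusion requires a different reference: the counterfactual trajectory $\mathbf{u}^t$, in which $[S_u]$ is excluded at every round. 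The sensitivity the noise must cover is then $\|\bm{w}^t-\mathbf{u}^t\|$, not the one-step gap.

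The propagation argument you set aside as unnecessary is exactly the missing step. The paper's proof defines $\mathbf{u}^t$ and uses the $L$-Lipschitz gradients to get
\begin{equation*}
\|\bm{w}^{t+1}-\mathbf{u}^{t+1}\|\le(1+\eta L)\|\bm{w}^t-\mathbf{u}^t\|+\eta\bigl\|G(\mathbf{u}^t,[N])-G(\mathbf{u}^t,[N]\setminus[S_u])\bigr\|.
\end{equation*}
Unrolling this bounds the trajectory gap by the cumulative quantity $\phi(t,n_u)=\sum_{\tau<t}(1+\eta L)^{t-1-\tau}\Delta_{[S_u]}(\tau)$. Only then does the paper invoke the Gaussian mechanism, with this $\phi$ as the $\ell_2$-sensitivity.

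A hypothesis that bounds only the one-step gap by $\phi(t,n_u)$ does not suffice. The per-round differences compound with factor $1+\eta L$, so the noise must be calibrated to the accumulated sum. If you replace your $\Delta$ by $\bm{w}^t-\mathbf{u}^t$, prove the recursion, and then run your one-dimensional privacy-loss argument, the proof goes through, subject to the $\epsilon<1$ caveat.
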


\begin{proof}

Let $\bm{w}^t$ be the global twin model up to iteration $t$ when all NDTs in $[N]$ are included.  Suppose we define a hypothetical model $\mathbf{u}^t$ when $[S_u]$ were excluded from every step up to iteration~$t$.  The corresponding model sensitivity
\(
   \alpha(t,n_u) \;=\; \|\bm{w}^t - \mathbf{u}^t\|
\)
captures how strongly $[S_u]$ can contribute to the twin model.
Under the $L$-smoothness assumption, each local gradient $\nabla \ell_i(\bm{w})$ is $L$-Lipschitz. Once the step size $\eta>0$ is chosen suitably, the difference between two twin models at iteration $t+1$ can be bounded by a factor times the difference at iteration~$t$. Formally, if we let $G(\bm{w},\mathcal{S})$ denote the aggregated gradient from subset $\mathcal{S}\subseteq[N]$ at $\bm{w}$, then from iteration~$t$ to $t+1$ we have
\(
   \bm{w}^{t+1} 
   \;=\;
   \bm{w}^t
   \;-\;
   \eta \, G(\bm{w}^t,[N]), 
\) and
\(
   \mathbf{u}^{t+1}
   \;=\;
   \mathbf{u}^t
   \;-\;
   \eta \, G(\mathbf{u}^t,\,[N]\!\setminus\![S_u]).
\) This gives us
\begin{equation}
\begin{split}
    \|\bm{w}^{t+1} - \mathbf{u}^{t+1}\|
   & \le \bigl(1 + \eta L\bigr)\,\|\bm{w}^t - \mathbf{u}^t\| \\
   & + \eta\,
   \Bigl\|\,
   G(\mathbf{u}^t,[N]) - G(\mathbf{u}^t,[N]\!\setminus\![S_u])
   \Bigr\|.
\end{split}
\end{equation}
Repeating this inequality over $t$ rounds yields a sum of the local influence from $[S_u]$.  We define
\(
   \phi(t,n_u)
   \;=\;
   \sum_{\tau=0}^{\,t-1}
   B^{\,t-1-\tau}\;\Delta_{[S_u]}(\tau),
\)
where $B=1+\eta L$ and
\(
   \Delta_{[S_u]}(\tau) 
   \;=\;
   \eta\,\bigl\|
   G\bigl(\mathbf{u}^\tau,[N]\bigr)
   -
   G\bigl(\mathbf{u}^\tau,[N]\!\setminus\![S_u]\bigr)
   \bigr\|.
\)
Thus, by induction, it follows that 
\(\alpha(t,n_u)\;=\;\|\bm{w}^t-\mathbf{u}^t\|\;\le\;\phi(t,n_u).\)  
Hence $\phi(t,n_u)$ stands as a valid bound on how far $\bm{w}^t$ can deviate from $\mathbf{u}^t$ due to $[S_u]$.
%
Recall that our \algO includes a step that rollback from the mapped twin model back to an earlier checkpoint $\bm{w}^{T-K}$, where $K$ is chosen so that the residual distance $\alpha(T-K, n_u)$ is below a threshold. Specifically, we define
\(
   K \;=\;\arg\max\nolimits_{k \le T}\,\Bigl[\alpha(T-k,n_u)\,\le\,\gamma^*\Bigr],
\)
or equivalently ensures $\phi(T-k,n_u)\le\gamma^*$.  In short, we find a checkpoint in the twinning process where $[S_u]$ either was absent or its influence was below a desired threshold.  Once we revert to $\bm{w}^{T-K}$ (and optionally add some small Gaussian noise), $[S_u]$ is effectively untwinned from the current global NDT, because any potential contributions from $n_u$ or its strongly connected neighbors do not propagate beyond iteration $T-K$ in the new timeline.
%
To solidify untwinning process, we utilize the standard Gaussian mechanism~\cite{dwork2014algorithmic}.  Let $\Delta$ be the $\ell_2$-sensitivity of the twin model and $\Delta = \|\bm{w}^t - \mathbf{u}^t\|\le\phi(t,n_u)$.  By adding Gaussian noise 
\(
   \mathcal{N}(\mathbf{0},\,\gamma^2 I), 
\)
the untwinned model obeys $(\epsilon,\beta)$-indistinguishability.  In simpler terms, an observer cannot decide whether $[S_u]$ was removed or not, because the distribution of the untwinned twin models differs from map-from-scratch by at most the factor in the $(\epsilon,\beta)$-indistinguishability.
\end{proof}


We now extend Theorem~\ref{thm:single_unlearning} to the case where \emph{multiple clusters} each have a set of NDTs to remove, as shown in Algorithm~\ref{federated_untwinning_multiple}.  
To avoid redundancy, we refer to the bounding arguments and rollback rationale from the proof of Theorem~\ref{thm:single_unlearning}.  
Here, we highlight the extra details needed to accommodate parallel untwinning across clusters.

\vspace*{1ex}
\begin{theorem}
\label{thm:parallel_unlearning}
Suppose the entire NDT system is split into $A$ clusters, indexed by $a=1,\dots,A$.  
Cluster $a$ untwins a set $[U_a]$ of NDTs.  
Under the same assumptions used in Theorem~\ref{thm:single_unlearning}, and with a Gaussian noise variance of
\(
  \gamma 
  \;\;\ge\;
  \frac{\Omega \phi_a(\cdot)}{\epsilon},
\)
the global twin model $\bm{w}^{T'}$ is $(\epsilon,\beta)$-indistinguishable for
\(
  \bigcup_{a=1}^A [U_a].
\)
\end{theorem}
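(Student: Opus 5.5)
The plan is to reduce Theorem~\ref{thm:parallel_unlearning} to Theorem~\ref{thm:single_unlearning} applied cluster by cluster, and then recombine the per-cluster guarantees through the global aggregation step of Algorithm~\ref{federated_untwinning_multiple}.

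\textbf{Step 1: per-cluster hypothetical trajectories.} For each cluster $a$, define a hypothetical cluster trajectory $\mathbf{u}_a^t$ in which $[U_a]$ (together with its strongly connected peers inside $C_a$) is excluded at every round. Define the cluster sensitivity as $\alpha_a(t)=\|\bm{w}_a^t-\mathbf{u}_a^t\|$.

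\textbf{Step 2: bound each cluster's drift.} Repeat the one-step recursion from the proof of Theorem~\ref{thm:single_unlearning}, with the aggregated gradient restricted to $C_a$:
\[
\|\bm{w}_a^{t+1}-\mathbf{u}_a^{t+1}\|\le (1+\eta L)\|\bm{w}_a^t-\mathbf{u}_a^t\|+\Delta_{[U_a]}(t).
\]
Unrolling this gives $\alpha_a(t)\le\phi_a(t)=\sum_{\tau<t}(1+\eta L)^{t-1-\tau}\Delta_{[U_a]}(\tau)$.

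Because $K_a$ is chosen as the maximum over $n\in\mathcal{U}_a$, the checkpoint $T-K_a$ is safe for every target in the cluster at once. For clusters with $\mathcal{U}_a=\emptyset$, set $\phi_a\equiv 0$ so that they contribute nothing.

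\textbf{Step 3: combine through aggregation.} Under FedAvg over clusters, the global deviation satisfies
\[
\|\bm{w}^{t}-\mathbf{u}^{t}\|\le \tfrac{1}{A}\sum_a \alpha_a(t)\le \max_a \phi_a(t).
\]
So the global $\ell_2$-sensitivity is at most $\phi_{\max}=\max_a\phi_a$. Interpreting $\phi_a(\cdot)$ in the noise condition as this maximum, the condition $\gamma\ge \Omega\phi_{\max}/\epsilon$ holds. A single application of the Gaussian mechanism to the released model then gives $(\epsilon,\beta)$-indistinguishability with respect to the whole union $\bigcup_a[U_a]$. This treats the union as one neighbouring relation, so no composition over clusters is needed.

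\textbf{Main obstacle: staggered retraining.} After the rollback, clusters with smaller $K_a$ hold frozen states while the others retrain. Clusters also exchange information through global synchronization, so in principle influence from $[U_a]$ could leak into cluster $b$'s trajectory after $T-K_b$.

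My first approach is to absorb this cross-cluster coupling into the recursion. I would write the joint deviation vector $(\alpha_1,\dots,\alpha_A)$ and note that synchronization is an averaging, hence nonexpansive, map. This keeps the maximum bounded by the same geometric sum $\phi_{\max}$. I would also argue that the frozen clusters' states lie before their own safe points, so the terms they contribute are already covered by $\phi_b(T-K_b)\le\gamma^*$.

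If this coupling argument fails, the fallback is to bound the global sensitivity by $\sum_a\phi_a$ instead of $\max_a\phi_a$. That would only change the constant in the noise level, which should be stated explicitly.
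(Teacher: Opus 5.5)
\paragraph{Where the proof breaks.} The gap is the last move of Step~3: ``a single application of the Gaussian mechanism to the released model \dots\ so no composition over clusters is needed.'' Algorithm~\ref{federated_untwinning_multiple} has no single perturbed quantity.

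Each affected cluster draws its own independent noise at its own checkpoint, $\tilde{\bm{w}}_a=\bm{w}^{T-K_a}+\mathcal{N}(\mathbf{0},\delta\mathbf{I})$. The final $\bm{w}^{T'}$ depends on all of these draws. It also depends on the held states of request-free clusters ($K_b=0$), which are never rolled back or perturbed but enter every global aggregation of the staggered phase.

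To finish by post-processing you would need two things:
\begin{itemize}
\item[(i)] You must treat the tuple $(\bm{w}^{T-K_a})_a$ as the released object. Its sensitivity to $\bigcup_a[U_a]$ is $\bigl(\sum_a\|\bm{w}^{T-K_a}-\mathbf{u}^{T-K_a}\|^2\bigr)^{1/2}$. This can be $\sqrt{A}$ times your bound, so $\gamma\ge\Omega\max_a\phi_a/\epsilon$ does not give $(\epsilon,\beta)$ unless you enlarge the noise or compose. If you instead mean one fresh draw added to $\bm{w}^{T'}$, you need the sensitivity of the remapped model, which your bound on $\|\bm{w}^t-\mathbf{u}^t\|$ along the original trajectory does not control.
\item[(ii)] You must show that the unperturbed held states carry no influence of the union.
\end{itemize}

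Your staggered-retraining argument does not supply (ii). ``Frozen states lie before their own safe points'' controls only $[U_b]$, through $\phi_b(T-K_b)\le\gamma^*$. Before rollback, however, every cluster was synchronized to a global model containing $[U_a]$ for $a\neq b$. Setting $\phi_b\equiv 0$ for a request-free cluster is fine as an injection term in a recursion. It says nothing about how much of the other clusters' targets that cluster's state absorbed, yet your argument uses it as if it did. The fallback of replacing $\max_a\phi_a$ by $\sum_a\phi_a$ changes a constant and touches neither (i) nor (ii).

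\paragraph{How the paper argues, and what you can keep.} The paper's proof is built around exactly the per-cluster noise structure your one-shot argument discards. It takes $\phi_a(t)$ to bound the deviation of the \emph{global} model caused by $[U_a]$ alone, and charges cluster $a$'s noise draw at $\bm{w}^{T-K_a}$ only to $[U_a]$. It then reuses the argument of Theorem~\ref{thm:single_unlearning} per cluster and inducts over $a=1,\dots,A$, using that already-excluded sets are never reintroduced by later remapping rounds. The paper keeps $(\epsilon,\beta)$ fixed across the $A$ steps without a composition bound, so if you follow that route, make the accounting explicit.

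What you can keep is the union-sensitivity bound on the pre-rollback trajectory. Run the recursion on the global model rather than on per-cluster trajectories, since synchronization makes the latter non-closed. With $G(\bm{w},\mathcal{S})$ as in the paper,
$\|\bm{w}^{t+1}-\mathbf{u}^{t+1}\|\le(1+\eta L)\|\bm{w}^{t}-\mathbf{u}^{t}\|+\frac{\eta}{A}\sum_a\|G(\mathbf{u}^t,C_a)-G(\mathbf{u}^t,C_a\setminus[U_a])\|$.
This gives $\|\bm{w}^t-\mathbf{u}^t\|\le\frac{1}{A}\sum_a\phi_a(t)$ directly and replaces your nonexpansiveness sketch.

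Closing (ii) needs, in either route, an additional assumption or a change to the held-state rule. The forward-phase global synchronization makes request-free clusters' states depend on every cluster's targets.
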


\begin{proof}

As in the proof of Theorem~\ref{thm:single_unlearning}, we define that for each cluster $a$, a function $\phi_a(t)$ bounding the $\ell_2$‐distance between a global twin model mapped \emph{with} $[U_a]$ vs.\ \emph{without} $[U_a]$.  
The same Lipschitz and smoothness arguments guarantee 
\(
  \|\bm{w}^t - \mathbf{v}_a^t\|
  \;\le\;
  \phi_a(t),
\)
where $\mathbf{v}_a^t$ is the hypothetical model had $[U_a]$ never contributed.
\algM rolls back the global twin model to $\bm{w}^{T-K_a}$, injects noise proportional to $\phi_a(T-K_a)$,
and remaps from there, excluding the twin model and data from $[U_a]$.
Each cluster $a$ does this process in parallel.  
Once cluster $a$ finishes rollback, it untwins $[U_a]$ in the sense of $(\epsilon,\beta)$‐indistinguishability, because any prior traces of $[U_a]$ are now obfuscated by noise.
We rely on an inductive approach:
\begin{list}{\labelitemi}{\leftmargin=1em \itemindent=-0.08em \itemsep=.2em}
\item \emph{Base case (first cluster).}  After the first cluster completes rollback and noise-adding step to untwin $[U_1]$, the resulting model is $(\epsilon,\beta)$-indistinguishable from the map-from-scratch twin model.
\item \emph{Inductive step (subsequent clusters).}  Suppose that after clusters $1,\dots,a-1$ have untwinned $[U_1],\dots,[U_{a-1}]$, the model is $(\epsilon,\beta)$-indistinguishable from map-from-scratch twin model $\bigcup_{j=1}^{a-1}[U_j]$.  
Cluster $a$ then does the same step to remove $[U_a]$.  
Because $[U_1],\dots,[U_{a-1}]$ were already excluded, no future mapping round reintroduces them.  
Therefore, the updated twin model is $(\epsilon,\beta)$-indistinguishable.
\end{list}
Iterating to $a=A$, we conclude that after all clusters have removed their target NDTs, the final NDT model is $(\epsilon,\beta)$-indistinguishable from map-from-scratch twin model.
\end{proof}

\section{Experiment and Validation}
\label{sec:exp}

\subsection{Experimental Setup}

\subsubsection{Scenario Configurations}

The proposed approach assumes cooperation, central control, and willingness to share rollback states, but the paper does not convincingly argue how this would be adopted in realistic competitive network environments.

We evaluate the performance of our proposed untwinning mechanisms using a real-world traffic data scenario from a segment of the I-15 freeway in Utah, U.S. 
Our collected dataset\footnote{The dataset has been uploaded to the Department of Transportation and is available upon request~\cite{udot2025traffic}.} aggregates traffic data from inductive loop detectors installed along the freeway, as shown in Fig.~\ref{fig:data}. 
These deployed sensors include both single-loop and dual-loop detectors, which continuously monitor traffic conditions by measuring vehicle counts and speeds at fixed locations along the roadway.
Several local NDTs are constructed to monitor vehicle traffic and speed in real-time, covering different highway areas. A few cluster-level NDTs are aggregated from local NDTs to expedite the parallel request untwinning in \algM. One global NDT is deployed on a roadside server to coordinate the entire twinning and untwinning process.

\begin{figure*}[!h]
    \centering
    \includegraphics[width=0.85\textwidth]{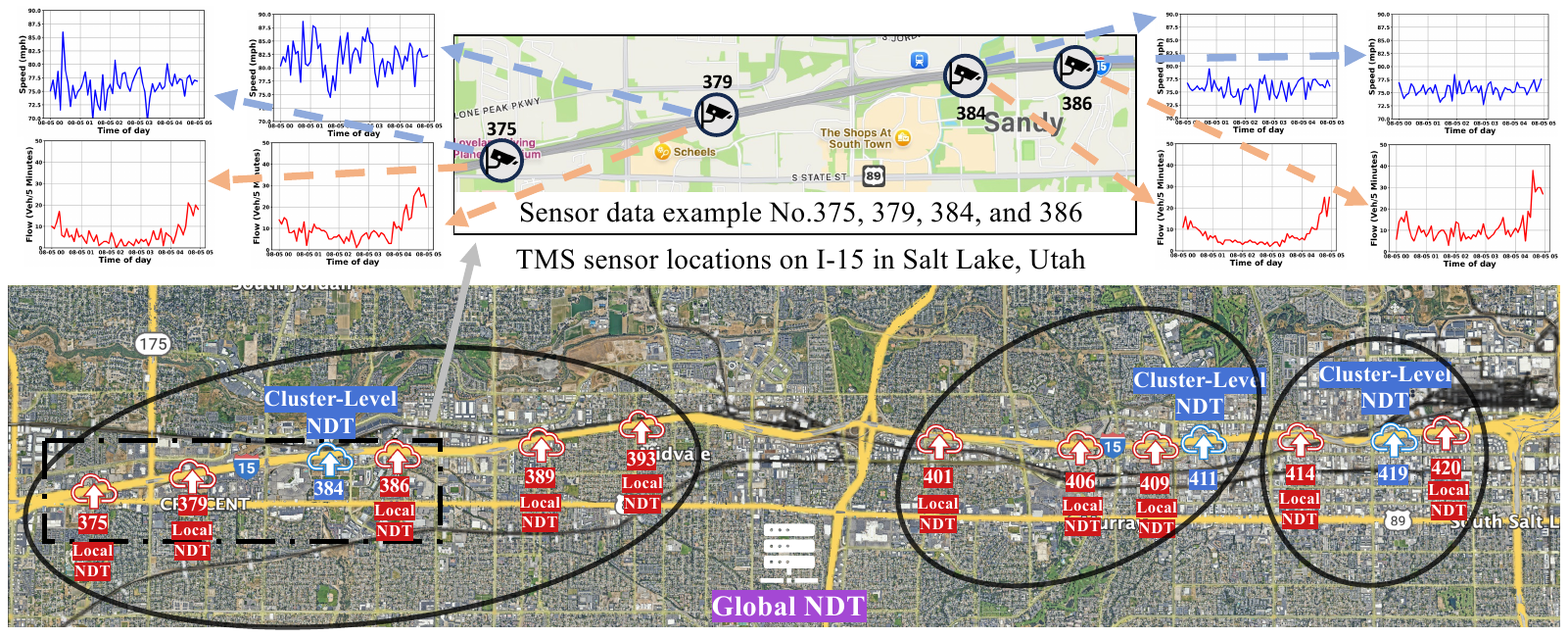}
    \caption{Testbed configurations on I-15 in Salt Lake, Utah.}
    \vspace{-.1in}
    \label{fig:data}
\end{figure*}

\subsubsection{Untwinning Benchmarks}

Since we are the first work to explore the untwinning process in sensor networks, there are no baselines that can be directly applied to our settings. 
Despite detailed operational differences, federated machine unlearning serves as a valid baseline due to its strong high-level conceptual alignment with our framework. Specifically, both paradigms operate within analogous distributed learning scenarios, share fundamental assumptions regarding data distributions, and aim to eliminate targeted data influences from an aggregated global model.
We adapt and re-implement six baseline unlearning methods, including SIFU and IFU~\cite{Fraboni2023sifu}, Crab~\cite{jiang2024towards}, $\text{FedME}^2$~\cite{10234397}, FedEraser~\cite{liu2020federaser}, FedRecovery~\cite{10189868}, and  FedAccum~\cite{liu2020federaser}, to our untwinning framework, and then illustrate the effectiveness of our process-centric methods. When receiving multiple parallel requests, these baseline methods process each untwinning request sequentially.

\subsubsection{Evaluation Metrics}

Our evaluation framework employs two principal metrics. The first one is prediction error measured by mean squared error (MSE), which rigorously quantifies the \textbf{twin model accuracy} by averaging the squared differences between the predicted results from the NDT and the observed traffic flow and vehicle information collected from 21 inductive loop detectors. 
The second one is the \textbf{twinning efficiency}, which measures the overall computational time required to complete the untwinning process.

\subsection{Numerical Results}

\myparatight{\textit{1) Untwinning strongly-connected NDTs is necessary}}
Figure~\ref{fig:connected} illustrates the necessity of untwinning both target NDT and its propagating influence, i.e., those strongly connected neighbors. Each marker represents a different map-from-scratch strategy: \textit{None} (no untwinning performed), \textit{Target Only} (untwinning only target NDT), and \textit{Connected Set} (untwinning target NDT plus its strongly connected neighbors). Red markers indicate vehicle speed data, while blue markers correspond to traffic flow data. 
X-axis shows the prediction error (i.e., MSE) from global twin model evaluated on the covered data from the remaining NDTs, whereas y-axis measures the prediction error testing on the covered data from target NDT. For both vehicle speed and traffic flow forecasting, untwinning only target NDT results in a slight increase in prediction error, indicating that correlated contextual information associated with target NDT persists in global twin through its strongly connected counterparts. 
Untwinning only target NDT leaves residual influence in global twin, as it reflects dynamics from correlated sensing areas. In contrast, untwinning multiple interconnected NDTs more effectively removes this influence, significantly increasing prediction error, for example, traffic flow error rises from 0.382 to 0.541, thus ensuring complete data removal.

\begin{figure}[!h]
    \centering
    \includegraphics[width=0.43\textwidth]{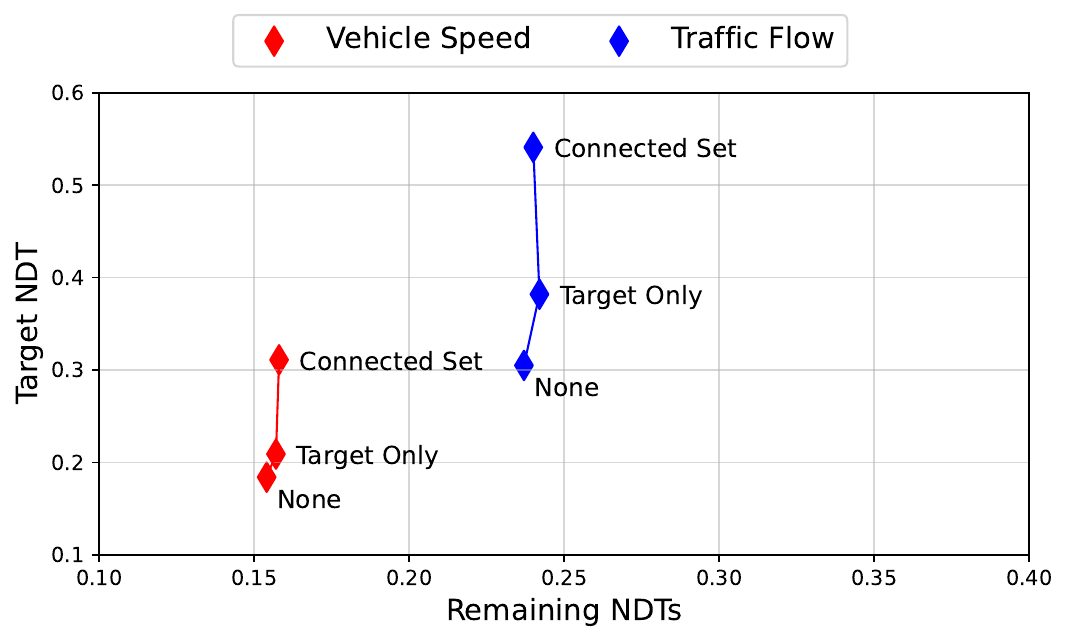}
    \caption{Comparison between untwinning target NDT only and untwinning strongly connected NDT set upon request.}
    \label{fig:connected}
\end{figure}

\myparatight{\textit{2) \algO is effective in small-scale settings}}
Fig.~\ref{fig:single_traffic} and \ref{fig:single_speed} evaluate the prediction error differences (PED) of untwinned models from our methods and the map-from-scratch twin (ground-truth model).
They demonstrate that \algO consistently achieves the lowest differences across both vehicle speed and traffic flow predictions when receiving only one untwinning request. 
Significant lower PED values are observed in the evaluations on target NDTs, shown in blue bars, while moderate improvements are illustrated in the testing on remaining NDTs, shown in orange bars, indicating that our proposed \algO more effectively isolates the impact of the target twins.
This is primarily due to its precise untwinning mechanism, which effectively removes the contributions of a target NDT and its propagating influence. 
By computing connectivity scores, \algO identifies not only the direct influence of target NDT but also the indirect effects from its correlated neighbors. 
\algO then determines an optimal rollback checkpoint by evaluating the contribution from each target NDT, ensuring that global twin model is rolled back to the state where the unwanted influences are most prominent. 
Furthermore, the injection of Gaussian noise during the rollback process further obfuscates any residual information, enabling a clean remapping phase that preserves beneficial contributions from those weakly connected NDTs that cover other adjacent sub-networks. 
This context-aware approach yields a robust model with minimal PED, confirming that \algO outperforms baseline methods in response to a single untwinning request while remaining indistinguishable from map-from-scratch twin models—consistent with our theoretical guarantee in Sec.~4.

\begin{figure}[t]
    \centering
    \footnotesize
    \begin{subfigure}[b]{0.23\textwidth}
        \centering
        \includegraphics[width=\textwidth]{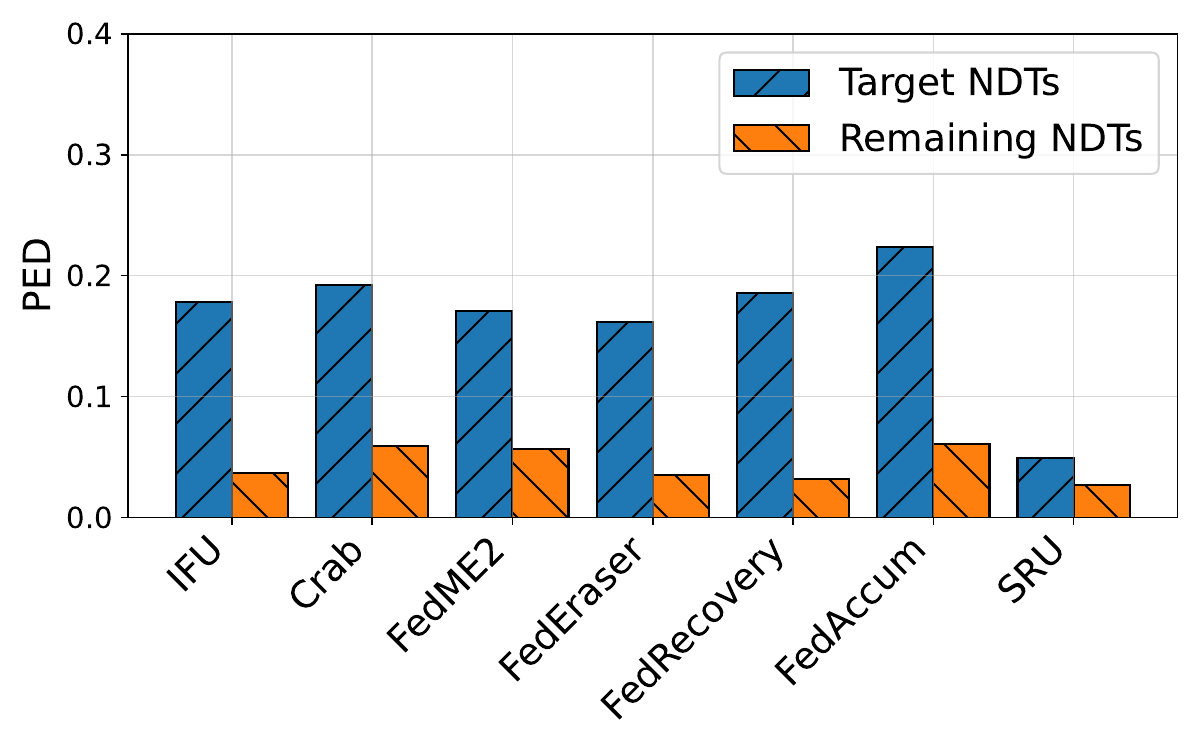}
        \caption{Traffic flow prediction error with one request.}
        \label{fig:single_traffic}
    \end{subfigure}
    \hfill
    \begin{subfigure}[b]{0.23\textwidth}
        \centering
        \includegraphics[width=\textwidth]{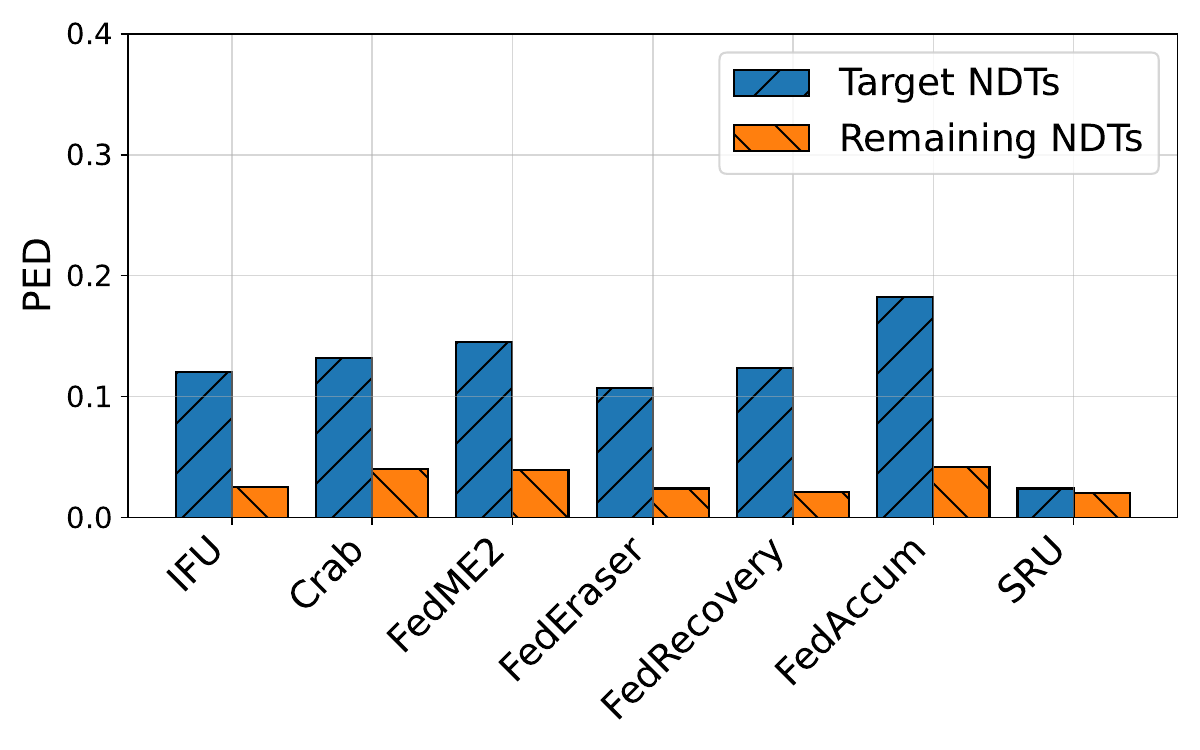}
        \caption{Vehicle speed prediction error with one request.}
        \label{fig:single_speed}
    \end{subfigure}
    
    \begin{subfigure}[b]{0.23\textwidth}
        \centering
        \includegraphics[width=\textwidth]{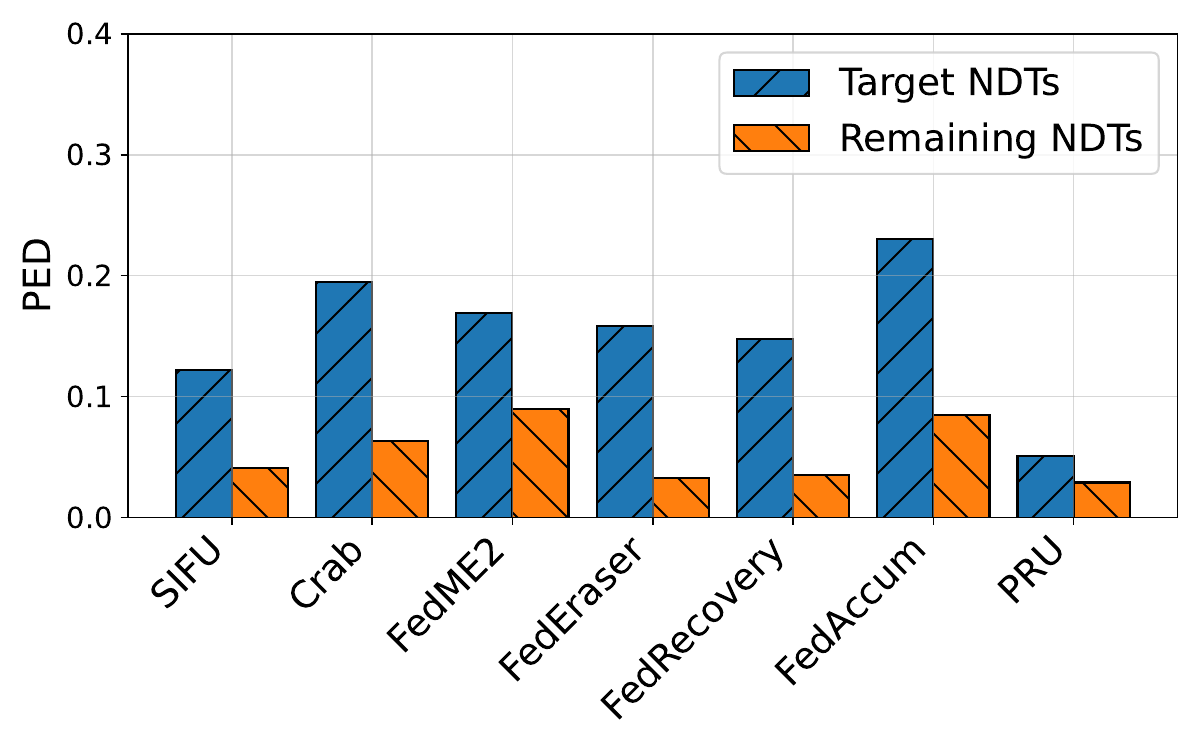}
        \caption{Traffic flow prediction error with three requests.}
        \label{fig:parallel_traffic}
    \end{subfigure}
    \hfill
    \begin{subfigure}[b]{0.23\textwidth}
        \centering
        \includegraphics[width=\textwidth]{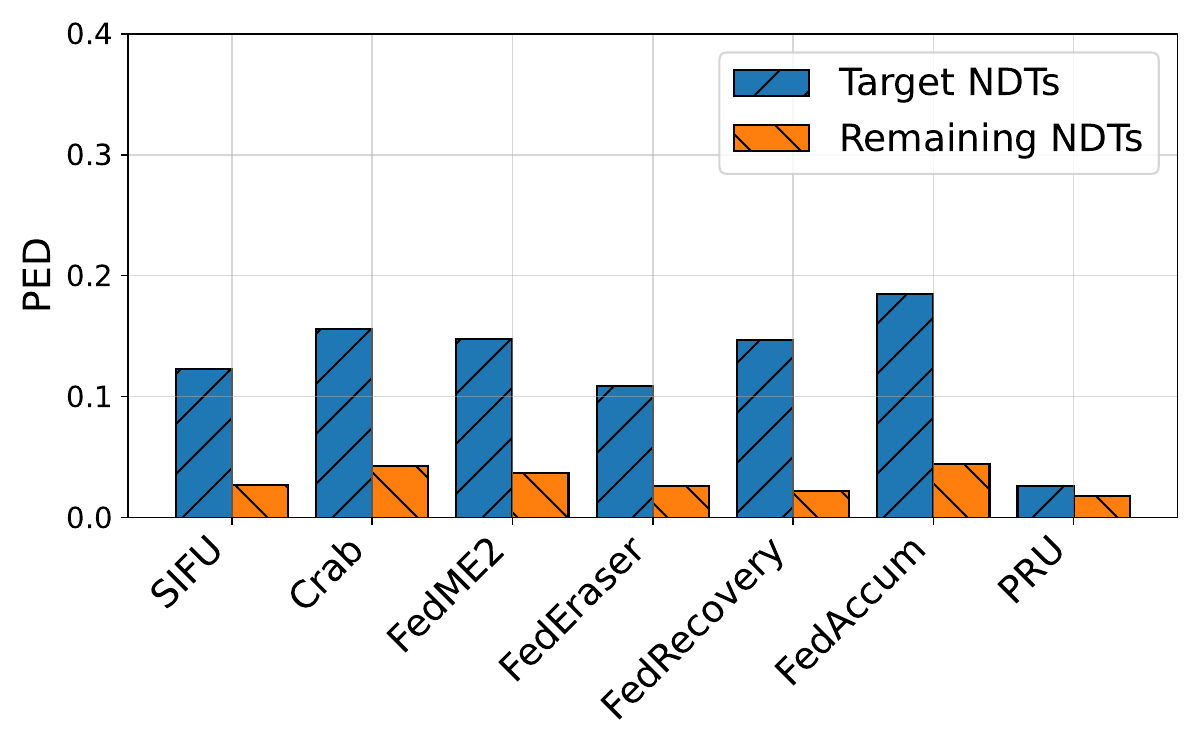}
        \caption{Vehicle speed prediction error with three requests.}
        \label{fig:parallel_speed}
    \end{subfigure}

    \caption{PED under diverse request settings.}
    \label{fig:combined_prediction}
\end{figure}

\myparatight{\textit{3) \algM efficiently handles multiple parallel requests}}
Fig.~\ref{fig:parallel_traffic} and Fig.~\ref{fig:parallel_speed} demonstrate that \algM consistently achieves lower PED from map-from-scratch twin models than the baseline methods under the scenario of multiple simultaneous untwinning requests.  
As expected in both scenarios, with three requests processed concurrently by default, the blue bars for target NDTs exhibit notably reduced prediction errors, while the orange bars for remaining NDTs also show appreciable improvements with \algM. 
These performance gains can be attributed to its effective network clustering of NDTs and the coordinated rollback mechanism, which carefully determines a \textit{per-cluster} rollback checkpoint to excise the contributions of all target NDTs and their propagating influences. 
Especially in cases where multiple untwinning requests occur within the same cluster, the system processes them collectively, thereby reducing overall time consumption.

\begin{table}[!ht]
\centering
\scriptsize
\caption{Results on untwinning overhead and scalability.}
\label{tab:combined}

\begin{subtable}[t]{0.22\textwidth}
\centering
\caption{Impact of NDT density.}
\begin{tabular}{@{}c|c|c|c@{}}
\toprule
\textbf{} & \textbf{10} & \textbf{20} & \textbf{30} \\
\midrule
Target & 0.051 & 0.043 & 0.045 \\
Remain & 0.029 & 0.020 & 0.023 \\
Time   & 251   & 480   & 698 \\
\bottomrule
\end{tabular}
\end{subtable}
\begin{subtable}[t]{0.22\textwidth}
\centering
\caption{Request ptg.}
\begin{tabular}{@{}c|c|c|c@{}}
\toprule
\textbf{} & \textbf{20\%} & \textbf{30\%} & \textbf{40\%} \\
\midrule
Target & 0.043 & 0.051 & 0.055 \\
Remain & 0.021 & 0.029 & 0.031 \\
Time   & 175   & 251   & 322 \\
\bottomrule
\end{tabular}
\end{subtable}
\end{table}

\noindent \textbf{\textit{4) PRU scales well with the NDT and request density:}} Table~\ref{tab:combined}a shows that the number of deployed NDTs does not significantly affect the untwinning performance of \algMns. In our default configuration, the sensor network comprises 10 NDTs and processes 3 concurrent untwinning requests. As the network scales, we evaluate a scenario in which 30\% of the NDTs are subject to untwinning. Despite the increased number of requests, the observed difference from map-from-scratch twins remains consistent across all configurations. Notably, the untwinning time does not increase linearly with network size—likely due to multiple requests occurring within the same cluster, enabling the clustering and scheduling mechanisms to optimize processing. This integrated approach produces a global twin model that remains indistinguishable from a fully remapped version while minimizing untwinning time.
Furthermore, Table~\ref{tab:combined}b shows that the number of untwinning requests has minimal impact on twin model accuracy. While the request count increases, the processing time is below the expected linear increase, since multiple requests within the same cluster are handled together.

\myparatight{\textit{5) Trade-offs between Accuracy and Computation}}
Table~\ref{tab:cost_tradeoff} shows the critical trade-offs between maintaining twin accuracy and reducing system costs. Standard remapping provides the upper-bound accuracy, but it is not practical due to its long runtime of over 3 hours and high storage demand of 850.5 MB. Baseline methods attempt to lower this latency but often sacrifice too much performance. For instance, SIFU achieves a fast runtime of 199 seconds but suffers from a massive drop in accuracy marked by a +323.81\% gap. Additionally, most baselines still require over 780 MB of storage to maintain historical states, which is nearly as expensive as full retraining. Our proposed methods balance these factors effectively. By using precise rollback with adaptive checkpointing, PRU reduces runtime to 132 seconds, which is roughly 94 times faster than retraining. It also drastically lowers storage costs to 42.5 MB. This represents only about 5\% of the storage required by other methods while maintaining high fidelity of the NDTs with a marginal accuracy gap of +0.48\%.

\begin{table}[t]
\centering
\caption{Comparison of computational overhead, parameter storage cost, and model fidelity with baselines.}
\label{tab:cost_tradeoff}
\resizebox{\columnwidth}{!}{%
\begin{tabular}{lcccc}
\toprule
\textbf{Method} & \textbf{\begin{tabular}[c]{@{}c@{}}Avg. Runtime\\ (sec)\end{tabular}} & \textbf{\begin{tabular}[c]{@{}c@{}}Storage Cost\\ (MB)\end{tabular}} & \textbf{\begin{tabular}[c]{@{}c@{}}MSE \end{tabular}} & \textbf{\begin{tabular}[c]{@{}c@{}}Acc. Gap\\ (vs. Remap)\end{tabular}} \\ \midrule
Remapping & 12,450 & 850.5 & 0.0210 & 0.00\% \\
FedEraser & 981 & 842.1 & 0.0315 & +50.00\% \\
Crab & 848 & 835.4 & 0.0342 & +62.86\% \\
$\mathbf{FedME^2}$ & 760 & 815.8 & 0.0410 & +95.24\% \\
FedRecovery & 629 & 805.2 & 0.0525 & +150.00\% \\
FedAccum & 491 & 798.5 & 0.0683 & +223.81\% \\
SIFU & 199 & 780.1 & 0.0891 & +323.81\% \\ \midrule
\textbf{SRU (Ours)} & 172 & \textbf{41.2} & 0.0212 & +0.95\% \\
\textbf{PRU (Ours)} & \textbf{132} & 42.5 & \textbf{0.0211} & \textbf{+0.48\%} \\ \bottomrule
\end{tabular}%
}
\end{table}

\myparatight{\textit{6) Adaptive Checkpointing Enhances the Unitwinning Efficiency}}
Table~\ref{tab:adaptive_checkpoint} shows an ablation study that highlights the efficiency of the proposed ATAC module (Sec. III-D). 
A naive strategy saves the global twin model at every round to ensure exact recovery, but this consumes a massive 42.50 GB of parameter storage. 
While increasing the fixed saving interval $p$ to 50 rounds reduces storage to 0.85 GB, it incurs a high \textit{replay overhead}--the averaged additional computation time required to reconstruct a specific state from the nearest preceding checkpoint. 
Because a fixed interval ignores network structural changes, the gap between the target round and the last saved checkpoint can be large, driving the replay overhead up to 5.8 seconds. 
In contrast, our ATAC dynamically adjusts the saving frequency based on model stability and network topology shifts. It retains only 48 high-utility checkpoints, achieving a 95.2\% storage reduction compared to the naive approach. Crucially, because ATAC intelligently preserves snapshots during rapid network changes, it keeps the replay overhead to a manageable 2.1 seconds, successfully balancing storage efficiency with fast recovery.

\begin{table}[t]
\centering
\caption{Efficiency enhancement by ATAC module.}
\label{tab:adaptive_checkpoint}
\resizebox{\columnwidth}{!}{%
\begin{tabular}{lcccc}
\toprule
\textbf{Checkpointing Strategy} & \textbf{\begin{tabular}[c]{@{}c@{}}Stored\\ Ckpts\end{tabular}} & \textbf{\begin{tabular}[c]{@{}c@{}}Storage\\ (GB)\end{tabular}} & \textbf{\begin{tabular}[c]{@{}c@{}}Storage\\ Reduct.\end{tabular}} & \textbf{\begin{tabular}[c]{@{}c@{}}Replay\\ Overhead (s)\end{tabular}} \\ \midrule
Naive (Every Round) & 1,000 & 42.50 & - & 0.0 \\
Fixed Interval ($p=10$) & 100 & 4.25 & 90.0\% & 1.2 \\
Fixed Interval ($p=50$) & 20 & 0.85 & 98.0\% & 5.8 \\ \midrule
\textbf{ATAC (Ours)} & \textbf{48} & \textbf{2.04} & \textbf{95.2\%} & \textbf{2.1} \\ \bottomrule
\end{tabular}%
}
\end{table}

\section{Related Works}
\label{sec:related}

\noindent \textbf{Network Digital Twins.} Several works highlight how NDTs are central to next-generation network systems, particularly for optimizing 5G deployments to boost both coverage and throughput \cite{networkdigitaltwin, digitaltwin5g, zhang2025synergizing,tang2025research,tao2024ran,li2024intelligent,tan2025energy, graphneuralnetwork}. \cite{connectingtwins} underscores vital need for solid data collection and real-time analytics in designing effective NDTs. 
Within vehicular networks, \cite{intelligentdigitaltwin, Ding2024digital} demonstrates that NDTs enhance reliability through advanced coordination strategies.  
Most of these studies, however, concentrate on how NDTs optimize network systems but do not delve deeply into detailed construction of NDTs for physical assets and properties~\cite{khan2022digital, Yang2024ICASSP}. Some recent research does explore underlying processes of constructing NDTs. For example, \cite{villa2024colosseum} proposes Colosseum, a testbed that fuses real-world experiments with emulated environments to refine wireless systems. 
In contrast, our approach introduces a cost-efficient strategy for reconstructing DTs in a backward direction, enabling targeted removal of undesired or deprecated NDT contributions. 

\vspace{+0.1cm}
\noindent \textbf{Machine Unlearning.} Machine unlearning has become an essential approach for enhancing privacy and security in machine learning, particularly within wireless and IoT networks~\cite{cao2015towards, bourtoule2020machineunlearning}. 
By allowing specific data points to be effectively removed from trained models, machine unlearning is vital for privacy protection, handling user disconnections, and mitigating data poisoning threats. 
FL has extended machine learning to decentralized settings, enabling devices to train models collaboratively without sharing their data~\cite{mcmahan2017communication}. 
However, as clients may wish to erase their data from global model, FL encounters unique challenges, since data contributions are distributed across various devices. Federated unlearning addresses these challenges by providing mechanisms to remove a client's impact from model~\cite{halimi2023federatedunlearningefficientlyerase, liu2020federaser, Fraboni2023sifu}.
NDTs have been used to bolster federated unlearning in vehicular and V2X communication networks, promoting low-latency, secure data interactions in these environments~\cite{yuan2024iot, Daluwatta2024dtfu}. 
Inspired by these existing studies, our work takes a further step by casting untwinning as a native capability of NDT systems, rather than merely a complementary module for supporting federated unlearning. Our method removes network entities together with their scenario-level influence under real-time synchronization, thereby preserving model fidelity while incurring minimal untwinning cost.

\section{Conclusion}

In this paper, we propose a network digital untwinning framework that selectively eliminates target NDTs with their propagating influences while maintaining model integrity. 
Our context-aware approach comprises both \algO and \algM fashions, where \algO removes target NDTs and their strongly connected neighbors through a joint optimal rollback and noise injection mechanism, while 
\algM extends this to handle multiple requests simultaneously via network-level clustering and untwining scheduling. 
Extensive experiments on real-world traffic data confirm that our proposed framework produces results indistinguishable from the map-from-scratch twin models while significantly improving operational efficiency.

\section*{Acknowledgment}
This research was supported by NSF through Award CNS--2440756, CNS--2312138, and SaTC-2350075.
\clearpage

\balance

\bibliographystyle{IEEEtran}
\bibliography{ref}

@inproceedings{cao2015towards,
    author={Cao, Yinzhi and Yang, Junfeng},
    booktitle={IEEE Security \& Privacy}, 
    title={Towards Making Systems Forget with Machine Unlearning}, 
    year={2015}
}

@inproceedings{bourtoule2020machineunlearning,
    title={Machine Unlearning},
    author={Lucas Bourtoule and Varun Chandrasekaran and Christopher A. Choquette-Choo and Hengrui Jia and Adelin Travers and Baiwu Zhang and David Lie and Nicolas Papernot},
    booktitle={IEEE Security \& Privacy},
    year={2021}
}

@InProceedings{mcmahan2017communication,
  title = 	 {{Communication-Efficient Learning of Deep Networks from Decentralized Data}},
  author = 	 {McMahan, Brendan and Moore, Eider and Ramage, Daniel and Hampson, Seth and Arcas, Blaise Aguera y},
  booktitle = 	 {AISTATS},
  year = 	 {2017},
}

@inproceedings{halimi2023federatedunlearningefficientlyerase,
      title={Federated Unlearning: How to Efficiently Erase a Client in FL?}, 
      author={Anisa Halimi and Swanand Kadhe and Ambrish Rawat and Nathalie Baracaldo},
      year={2022},
      booktitle={ICML}
}

@inproceedings{liu2020federaser,
    author={Liu, Gaoyang and Ma, Xiaoqiang and Yang, Yang and Wang, Chen and Liu, Jiangchuan},
    booktitle={IEEE IWQoS}, 
    title={FedEraser: Enabling Efficient Client-Level Data Removal from Federated Learning Models}, 
    year={2021},
}

@inproceedings{Fraboni2023sifu,
  title={Sequential Informed Federated Unlearning: Efficient and Provable Client Unlearning in Federated Optimization},
  author={Fraboni, Yann and Vidal, Richard and Kameni, Laetitia and Lorenzi, Marco},
  booktitle={AISTATS},
  year={2024}
}

@inproceedings{yuan2024iot,
  author={Yuan, Yanli and Wang, Bingbing and Zhang, Chuan and Xiong, Zehui and Li, Chunhai and Zhu, Liehuang},
  booktitle={IEEE IoT-J}, 
  title={Toward Efficient and Robust Federated Unlearning in IoT Networks}, 
  year={2024},
}

@inproceedings{Daluwatta2024dtfu,
  author={Daluwatta, Wathsara and Edirimannage, Shehan and Elvitigala, Charitha and Khalil, Ibrahim and Atiquzzaman, Mohammed},
  booktitle={IEEE CommMag}, 
  title={DT-FU: Digital Twin-Driven Federated Unlearning for Resilient Vehicular Networks in the 6G Era}, 
  year={2024}
}

@inproceedings{10189868,
  author={Zhang, Lefeng and Zhu, Tianqing and Zhang, Haibin and Xiong, Ping and Zhou, Wanlei},
  booktitle={IEEE TIFS}, 
  title={FedRecovery: Differentially Private Machine Unlearning for Federated Learning Frameworks}, 
  year={2023}
}

@inproceedings{jiang2024towards,
  title={Towards Efficient and Certified Recovery from Poisoning Attacks in Federated Learning},
  author={Jiang, Yu and Shen, Jiyuan and Liu, Ziyao and Tan, Chee Wei and Lam, Kwok-Yan},
  booktitle={arXiv preprint arXiv:2401.08216},
  year={2024}
}

@inproceedings{10234397,
    author={Xia, Hui and Xu, Shuo and Pei, Jiaming and Zhang, Rui and Yu, Zhi and Zou, Weitao and Wang, Lukun and Liu, Chao},
    booktitle={IEEE JSAC}, 
    title={FedME2: Memory Evaluation \& Erase Promoting Federated Unlearning in DTMN}, 
    year={2023},
}

@techreport{Y3090,
    type =      {Y. 3090 Recommendations},
    institution =   {International Telecommunication Union},
    publisher = {International Telecommunication Union},
    title =     {Digital twin network – Requirements and architecture},
    year = 2024
}

@techreport{Y3091,
    type =      {Y. 3091 Recommendations},
    institution =   {International Telecomm. Union},
    publisher = {International Telecomm. Union},
    title =     {Digital twin network - Capability levels and evaluation methods},
    year = 2024
}

@inproceedings{wu2021digital,
  title={Digital twin networks: A survey},
  author={Wu, Yiwen and Zhang, Ke and Zhang, Yan},
  booktitle={IEEE IoT-J},
  year={2021}
}

@article{zhang2025digitalmag,
  title={Digital network twins for next-generation wireless: Creation, optimization, and challenges},
  author={Zhang, Zifan and Peng, Zhiyuan and Yu, Hanzhi and Chen, Mingzhe and Liu, Yuchen},
  journal={IEEE network},
  year={2025},
  publisher={IEEE}
}

@inproceedings{elayan2021digital,
  title={Digital twin for intelligent context-aware IoT healthcare systems},
  author={Elayan, Haya and Aloqaily, Moayad and Guizani, Mohsen},
  booktitle={IEEE IoT-J},
  year={2021}
}

@inproceedings{hakiri2024comprehensive,
  title={A comprehensive survey on digital twin for future networks and emerging Internet of Things industry},
  author={Hakiri, Akram and Gokhale, Aniruddha and Yahia, Sadok Ben and Mellouli, Nedra},
  booktitle={Computer Networks},
  year={2024}
}

@inproceedings{wang2024smart,
  title={Smart mobility digital twin based automated vehicle navigation system: A proof of concept},
  author={Wang, Kui and Li, Zongdian and Nonomura, Kazuma and Yu, Tao and Sakaguchi, Kei and Hashash, Omar and Saad, Walid},
  booktitle={IEEE TIV},
  year={2024}
}

@inproceedings{khan2022digital,
  title={Digital twin of wireless systems: Overview, taxonomy, challenges, and opportunities},
  author={Khan, Latif U and Han, Zhu and Saad, Walid and Hossain, Ekram and Guizani, Mohsen and Hong, Choong Seon},
  booktitle={IEEE ComST},
  year={2022}
}

@inproceedings{tao2024wireless,
  title={Wireless network digital twin for 6g: Generative ai as a key enabler},
  author={Tao, Zhenyu and Xu, Wei and Huang, Yongming and Wang, Xiaoyun and You, Xiaohu},
  booktitle={IEEE WCM},
  year={2024}
}

@inproceedings{zhang2024digital,
  title={Digital Twin-Assisted Data-Driven Optimization for Reliable Edge Caching in Wireless Networks},
  author={Zhang, Zifan and Liu, Yuchen and Peng, Zhiyuan and Chen, Mingzhe and Xu, Dongkuan and Cui, Shuguang},
  booktitle={IEEE JSAC},
  year={2024}
}

@inproceedings{villa2024colosseum,
  title={Colosseum as a digital twin: Bridging real-world experimentation and wireless network emulation},
  author={Villa, Davide and Tehrani-Moayyed, Miead and Robinson, Clifton Paul and Bonati, Leonardo and Johari, Pedram and Polese, Michele and Melodia, Tommaso},
  booktitle={IEEE TMC},
  year={2024}
}

@inproceedings{zhang2024mapping,
    author = {Zifan Zhang and Mingzhe Chen and Zhaohui Yang and Yuchen Liu},
    title = {Mapping Wireless Networks into Digital Reality through Joint Vertical and Horizontal Learning},
    booktitle = {IFIP/IEEE Networking},
    year = {2024}
}

@article{zhang2024securing,
  title={Securing distributed network digital twin systems against model poisoning attacks},
  author={Zhang, Zifan and Fang, Minghong and Chen, Mingzhe and Li, Gaolei and Lin, Xi and Liu, Yuchen},
  journal={IEEE IoT-J},
  year={2024},
  publisher={IEEE}
}

@inproceedings{Yang2024ICASSP,
  title={Optimizing Synchronization Delay for Digital Twin over Wireless Networks},
  author={{Yang, Zhaohui and Chen, Mingzhe and Liu, Yuchen and Zhang, Zhaoyang}},
  booktitle={IEEE ICASSP},
  year={2024}
}

@inproceedings{Ding2024digital,
  title={Joint Vehicle Connection and Beamforming Optimization in Digital Twin Assisted Integrated Sensing and Communication Vehicular Networks},
  author={Ding, W. and Zhang and Chen, M. and Liu, Y. and Shikh-Bahaei, M.},
  booktitle={IEEE IoT-J},
  year={2024}
}

@inproceedings{networkdigitaltwin,
	title        = {Network digital twin: Context, enabling technologies, and opportunities},
	author       = {Almasan, Paul and Ferriol-Galm{\'e}s, Miquel and Paillisse, Jordi and Su{\'a}rez-Varela, Jos{\'e} and Perino, Diego and L{\'o}pez, Diego and Perales, Antonio Agustin Pastor and Harvey, Paul and Ciavaglia, Laurent and Wong, Leon and others},
	year         = 2022,
	booktitle      = {IEEE CommMag}
}

@inproceedings{digitaltwin5g,
	title        = {Digital twin for {5G} and beyond},
	author       = {Nguyen, Huan X and Trestian, Ramona and To, Duc and Tatipamula, Mallik},
	year         = 2021,
	booktitle      = {IEEE CommMag},
}

@inproceedings{connectingtwins,
	title        = {Connecting the twins: A review on digital twin technology \& its networking requirements},
	author       = {Mashaly, Maggie},
	year         = 2021,
	booktitle      = {Procedia Computer Science}
}

@inproceedings{graphneuralnetwork,
	title        = {A graph neural network-based digital twin for network slicing management},
	author       = {Wang, Haozhe and Wu, Yulei and Min, Geyong and Miao, Wang},
	year         = 2020,
	booktitle      = {IEEE TII}
}

@inproceedings{intelligentdigitaltwin,
	title        = {Intelligent digital twin-based software-defined vehicular networks},
	author       = {Zhao, Liang and Han, Guangjie and Li, Zhuhui and Shu, Lei},
	year         = 2020,
	booktitle      = {IEEE Network}
}

@inproceedings{dwork2014algorithmic,
  title={The algorithmic foundations of differential privacy},
  author={Dwork, Cynthia and Roth, Aaron and others},
  booktitle={Found Trends Theor Comput Sci},
  year={2014}
}

@misc{udot2025traffic,
  author       = {{Utah Department of Transportation}},
  title        = {Traffic Data},
  howpublished = {\url{https://www.udot.utah.gov/connect/business/traffic-data/}},
  year         = {2025},
  note         = {Accessed: March 20th, 2025}
}

@inproceedings{zhang2025synergizing,
  title={Synergizing AI and Digital Twins for Next-Generation Network Optimization, Forecasting, and Security},
  author={Zhang, Zifan and Fang, Minghong and Chen, Dianwei and Yang, Xianfeng and Liu, Yuchen},
  booktitle={IEEE WCM},
  year={2025}
}

@article{he2022resource,
  title={Resource allocation based on digital twin-enabled federated learning framework in heterogeneous cellular network},
  author={He, Yejun and Yang, Mengna and He, Zhou and Guizani, Mohsen},
  journal={IEEE TVT},
  year={2022}
}

@article{lin20236g,
  title={6G digital twin networks: From theory to practice},
  author={Lin, Xingqin and Kundu, Lopamudra and Dick, Chris and Obiodu, Emeka and Mostak, Todd and Flaxman, Mike},
  journal={IEEE CommMag},
  year={2023}
}

@article{chen2024distributed,
  title={Distributed digital twin migration in multi-tier computing systems},
  author={Chen, Zhixiong and Yi, Wenqiang and Nallanathan, Arumugam and Chambers, Jonathon A},
  journal={IEEE JSTSP},
  year={2024}
}

@article{wang2022mobility,
  title={Mobility digital twin: Concept, architecture, case study, and future challenges},
  author={Wang, Ziran and Gupta, Rohit and Han, Kyungtae and Wang, Haoxin and Ganlath, Akila and Ammar, Nejib and Tiwari, Prashant},
  journal={IEEE IoT-J},
  year={2022}
}

@article{zhang2021digital,
  title={Digital twin empowered content caching in social-aware vehicular edge networks},
  author={Zhang, Ke and Cao, Jiayu and Maharjan, Sabita and Zhang, Yan},
  journal={IEEE TCSS},
  year={2021}
}

@article{zhang2025two,
  title={Two-phase authentication for secure vehicular digital twin communications},
  author={Zhang, Xinwei and Lai, Chengzhe and Li, Guanjie and Zheng, Dong},
  journal={Computer Networks},
  year={2025}
}

@article{wang2024secure,
  title={Secure and flexible data sharing with dual privacy protection in vehicular digital twin networks},
  author={Wang, Chenhao and Ming, Yang and Liu, Hang and Feng, Jie and Zhang, Ning},
  journal={IEEE TITS},
  year={2024}
}

@article{wang2025rsaka,
  title={RSAKA-VDT: Designing Reliable and Provably Secure Authenticated Key Agreement Scheme for Vehicular Digital Twin Networks},
  author={Wang, Kai and Dong, Jiankuo and Wang, Shiqin and Yuan, Zhijian and Sha, Letian and Xiao, Fu},
  journal={IEEE TVT},
  year={2025}
}

@article{jabeen2025securing,
  title={Securing Vehicle-to-Digital Twin Communications in the Internet of Vehicles},
  author={Jabeen Siddiqi, Sadia and Alobaidi, Abdulraheem H and Ahmad Jan, Mian and Tariq, Muhammad},
  journal={ACM TOMM},
  year={2025}
}

@article{li2024secr,
  title={SECR: A Secure and Efficient Charging Reservation Scheme Based on Digital Twin in Vehicular Network},
  author={Li, Guanjie and Luan, Tom H and Zheng, Jinkai and Lai, Chengzhe and Zhang, Kuan and Yu, Shui},
  journal={IEEE IoT-J},
  year={2024}
}

@article{ding2025digital,
  title={Digital-Twin-Enabled Federated Learning and CNN-based Channel Estimation for Urban Vehicular Channels},
  author={Ding, Cao and Ho, Ivan Wang-Hei},
  journal={IEEE IoT-J},
  year={2025}
}

@article{tang2025research,
  title={Research on Integrated Sensing, Communication Resource Allocation and Digital Twin Placement Based on Digital Twin in IoV},
  author={Tang, Lun and Wang, Asha and Xia, Bingsen and Tang, Yuanchun and Chen, Qianbin},
  journal={IEEE IoT-J},
  year={2025}
}

@article{tao2024ran,
  title={O-RAN-based digital twin function virtualization for sustainable IoV service response: An asynchronous hierarchical reinforcement learning approach},
  author={Tao, Yihang and Wu, Jun and Pan, Qianqian and Bashir, Ali Kashif and Omar, Marwan},
  journal={IEEE TGCN},
  year={2024}
}

@article{li2024intelligent,
  title={Intelligent reflecting surface and network slicing assisted vehicle digital twin update},
  author={Li, Li and Tang, Lun and Wang, Yaqing and Liu, Tong and Chen, Qianbin},
  journal={IEEE TITS},
  year={2024}
}

@article{tan2025energy,
  title={Energy-Efficient Federated Learning Training Optimization for Digital Twin Driven 6G Air-Ground Integrated Vehicular Networks},
  author={Tan, Can and Yu, Peng and Qu, Zhaowei and Zhang, Lixin and Li, Wenjing and Qiu, Xuesong and Guo, Shaoyong},
  journal={IEEE TITS},
  year={2025}
}

@article{li2025towards,
  title={Towards Intelligent Transportation with Pedestrians and Vehicles In-the-Loop: A Surveillance Video-Assisted Federated Digital Twin Framework},
  author={Li, Xiaolong and Wei, Jianhao and Wang, Haidong and Dong, Li and Chen, Ruoyang and Yi, Changyan and Cai, Jun and Niyato, Dusit and Shen, Xuemin},
  journal={IEEE Network},
  year={2025},
  publisher={IEEE}
}

@article{pegurri2025van3twin,
  title={VaN3Twin: the Multi-Technology V2X Digital Twin with Ray-Tracing in the Loop},
  author={Pegurri, Roberto and Gasco, Diego and Linsalata, Francesco and Rapelli, Marco and Moro, Eugenio and Raviglione, Francesco and Casetti, Claudio},
  journal={arXiv preprint arXiv:2505.14184},
  year={2025}
}

\end{document}